\documentclass[journal, 10pt]{IEEEtran}
\usepackage{amsthm,amsmath}
\usepackage{amsfonts}
\usepackage{amssymb}
\usepackage{mathrsfs}
\usepackage{mathtools}
\usepackage{float}
\usepackage[pdftex]{graphicx}
\usepackage{amsmath}
\usepackage{color}
\usepackage{multirow,multicol}
\usepackage{graphicx}
\usepackage{times}
\usepackage{textcomp}
\usepackage{verbatim}
\usepackage[table]{xcolor}
\usepackage{balance}
\usepackage{lipsum}
\usepackage[inline]{enumitem}
\usepackage[caption=false,font=footnotesize]{subfig}
\usepackage{cite}
\usepackage{algpseudocode,algorithm}
\usepackage{hyperref}
\usepackage{tcolorbox}
\usepackage{setspace,epstopdf}
\usepackage[table]{xcolor}
\definecolor{color1}{RGB}{199,209,232}
\definecolor{color2}{RGB}{230,231,233}

\DeclareMathOperator*{\argmax}{argmax} 
\DeclareMathOperator*{\maximize}{maximize} 
\DeclareMathOperator*{\minimize}{minimize} 
\DeclareMathOperator*{\subjectto}{subject\hspace{3pt} to \hspace{3pt}} 
\newtheorem{theorem}{Theorem}

\newtheorem{lemma}[theorem]{Lemma}

\begin{document}
	
	\title{Boosting Spectral Efficiency via Spatial Path \\ Index Modulation in RIS-Aided  mMIMO}
	
	
	\author{\IEEEauthorblockN{Ahmet M. Elbir, \textit{Senior Member, IEEE}, Abdulkadir Celik, \textit{Senior Member, IEEE}, \\ Asmaa Abdallah,  \textit{Member, IEEE}, and Ahmed M. Eltawil, \textit{Senior Member, IEEE}  }
		
		\thanks{The conference precursor of this work has been presented in IEEE PIMRC 2025  Conference \cite{spim_ris_conf1}.}
		
		\thanks{The work of A. M. Elbir is supported by TUBITAK (Turkish National Research Council) with the Grant No. 125E365.}
		
		\thanks{A. M. Elbir is with the Dept. of Electrical and Electronics Engineering at Istinye University, 34396 Istanbul, Turkey; and the Computer, Electrical, Mathematical Sciences \& Engineering Division, King Abdullah University of Science and Technology, Thuwal 23955, Saudi Arabia (e-mail: ahmetmelbir@gmail.com).}
		
		\thanks{A. Celik is with the School of Electronics and Computer Science, University of Southampton, Southampton SO17 1BJ, U.K. (e-mail: a.celik@soton.ac.uk).}
		
		\thanks{A. Abdallah and A. M. Eltawil are with the Computer, Electrical, Mathematical Sciences \& Engineering Division, King Abdullah University of Science and Technology, Thuwal 23955, Saudi Arabia (e-mail: asmaa.abdallah@kaust.edu.sa, ahmed.eltawil@kaust.edu.sa). } 
	}
	\maketitle
	
	\begin{abstract}
		Next generation wireless networks focus on improving spectral efficiency (SE) while reducing power consumption and hardware cost.	Reconfigurable intelligent surfaces (RISs) offer a viable solution to meet these requirements.	In order to enhance the SE, index modulation (IM) has been regarded as one of the enabling technologies via the transmission of additional information bits over the transmission media such as subcarriers, antennas and spatial paths. In this work, we explore the usage of spatial paths and introduce spatial path IM (SPIM) for RIS-aided massive multiple-input multiple-output (mMIMO) systems. Thus, the proposed framework  improves the network efficiency and the coverage with the use of RIS while SPIM provides SE improvement. In order to perform SPIM, we exploit the spatial diversity of the millimeter wave channel and assign the index bits to the spatial patterns of the channel between the base station and the users through RIS. We introduce a low complexity approach for the design of hybrid beamformers, which are constructed by the steering vectors corresponding to the selected spatial path indices for SPIM-mMIMO. Furthermore, we conduct a theoretical analysis on the SE of the proposed SPIM approach, and derive the SE relationship between the SPIM-based hybrid beamforming and fully digital (FD) beamforming. Via numerical simulations, we validate our theoretical results and show that the proposed SPIM approach presents an improved SE performance, even higher than that of the use of FD beamformers while using a few RF chains.
	\end{abstract}

	\begin{IEEEkeywords}
		Reconfigurable intelligent surfaces, index modulation, spatial path index modulation, hybrid beamforming.
	\end{IEEEkeywords}
	%
	\section{Introduction}
	\label{sec:Introduciton}
	\IEEEPARstart{N}{ext} generation wireless technologies focus on improving the energy and spectral efficiency (EE/SE) in response to the demand for new services, massive number of users and high data rate. Index modulation (IM) is one of the promising techniques as it offers both EE and SE improvement in comparison with the conventional modulation schemes~\cite{indexMod_Survey_Mao2018Jul}. 
	
	In IM, in addition to the information bits transmitted via conventional amplitude/phase modulation (APM), the indices of the transmission media such as subcarriers~\cite{im_basar_OFDM_Basar2013Aug}, antennas~\cite{antenna_grouping_SM} and spatial paths~\cite{spim_BIM_TVT_Ding2018Mar} are employed to transmit more information bits, as illustrated in Fig.~\ref{fig_IM}. By exploiting the diversity in these transmission entities, the transmitter encodes the information bits (i.e., index bits) in the indices of each of these transmission media, yielding a SE enhancement proportional to the IM media diversity. On the other hand, the use of IM media also constitutes a trade-off; for example, IM requires some of the subcarriers should be not utilized during on/off index bits assignment, thereby leading to a partial access to the whole spectrum~\cite{im_spatial_modulation_Mesleh2008Jul,spim_CapacityAnalysis_Guo2020Mar}. Similarly, in antenna index modulation, the full  array is not utilized and it causes loss in the SE~\cite{spim_secureSM_SubarraySelection_Shu2020Nov}. Another IM scheme is over the spatial paths between the transmitter and the receiver, wherein a portion of the channel paths are selected for encoding the index bits for each spatial pattern thereby conveying additional information bits~\cite{spim_GBM_Gao2019Jul,spim_bounds_JSTSP_Wang2019May}. Unlike the IM over frequency and antenna indices, SPIM is shown to be more robust against the performance loss due to the sacrificing the unused transmission media (e.g., subcarrier or antenna element) during information embedding~\cite{spim_GBMM_Guo2019Jul,spim_bounds_JSTSP_Wang2019May,elbir_IM_ISAC_Elbir2024Nov}. 
	
	Furthermore, researchers have been also aiming to reduce the hardware cost and power consumption to improve the EE and SE of the overall communication systems. In this respect, reconfigurable intelligent surfaces (RISs) offer a viable solution as one of the enabling technologies in the next generation wireless networks for EE improvement and extended coverage \cite{irs_TWC}. An RIS comprises a two-dimensional (2D) reconfigurable electromagnetic (EM) metasurface, constructed from a large periodic array of subwavelength scattering elements, commonly referred to as meta-atoms. The reflection coefficient of each RIS element can be locally adjusted in terms of both phase and polarization, enabling real-time manipulation of the direction of incident EM waves and thereby facilitating adaptive and programmable functionalities. The phase shifts of RIS elements are controlled by external signals transmitted from the base station (BS) via a dedicated backhaul control link \cite{irs_PLS_Arzykulov2023Mar}. Consequently, the signal transmitted from the BS can be dynamically redirected toward intended users, improving signal strength and coverage. By leveraging RIS, it becomes possible to significantly enhance the energy of received signals at distant user terminals, thereby extending the effective coverage area of the BS. Thus, combining the aforementioned two innovative approaches -- RIS and IM -- is the main motivation of this work for SE improvement of the communication system. Fig.~\ref{fig_IM}(d) shows the illustration of the RIS-aided communication system with IM.
	
	
	\begin{figure*}[t]
		\centering
		{\includegraphics[draft=false,width=\textwidth]{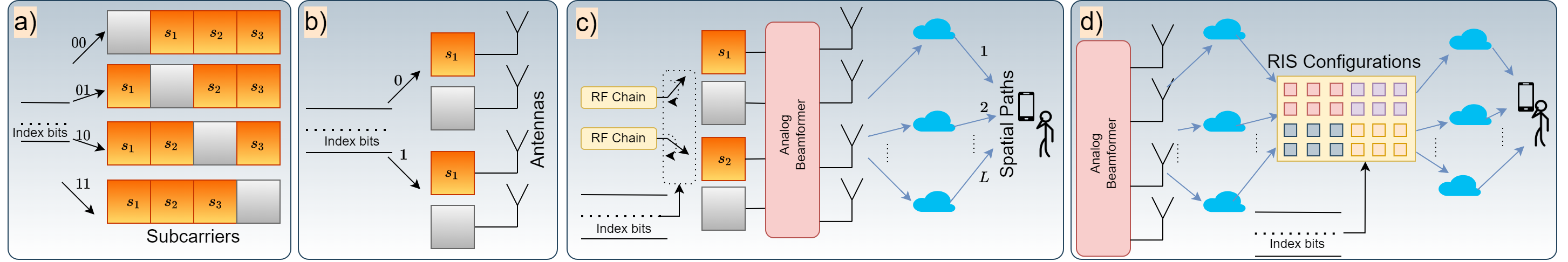} } 
		\caption{The illustration of the IM schemes over various transmission entities. In addition to the transmission of conventional APM bits, the index bits are assigned to the indices of  a) subcarriers, b) antennas, c) spatial paths and d) RIS configurations. 	}
		\label{fig_IM}
	\end{figure*}

	\subsection{Prior Work}
	
	In order to improve the SE in RIS-aided systems, various IM schemes have been employed in the existing works. In~\cite{irs_reflectionMod_Guo2020Jul}, reflection modulation technique has been introduced, wherein  the index bits are assigned to the reflection patterns corresponding to different RIS element configurations (see, e.g., Fig.~\ref{fig_IM}(d)). In order to achieve a similar purpose, a beam index modulation approach is presented by employing two closely located RISs in \cite{irs_BIM_Gopi2020Oct}, where the index bits are assigned to different RIS element configuration while assuming additive white Gaussian noise (AWGN) channel between the RISs. Another approach to diversify the RIS elements is to divide the RIS elements into multiple groups, to which the index bits are assigned for conveying additional information bits as in \cite{im_irs_grouping_Asmoro2022Sep}. To further improve the effective gain of the RIS while exploiting the RIS groups, the process of the assignment of the index bits is performed according to the channel gain of the received paths at the RIS in 	\cite{im_irs_offset_Zhang2024Mar}. As a result, not all of the RIS elements are selected, thereby generating an offset while avoiding the frequent controller operation between the BS and the RIS. Also in \cite{im_RIS_elements_Matemu2024Nov}, the RIS elements are divided into microstrips and the RIS elements are selected microstrip-wise for IM. In \cite{im_irs_superimposed_Yao2023Aug}, superimposed RIS phase modulation scheme is proposed, wherein tunable	phase offsets are superimposed onto predetermined RIS phases to embed the index bits.  In \cite{im_RIS_antennas_Basar2020Feb}, antenna indices are employed in RIS-aided scenario and RIS-space shift keying (RIS-SSK) and RIS-spatial modulation (RIS-SM) schemes are introduced. This approach is further generalized in \cite{   im_irs_groupoing_receive_antenna_Zhang2021Oct} by incorporating the grouping of RIS elements together with antenna indices.  To implement IM in frequency domain, the authors in \cite{im_irs_subcarrier_Hodge2020Sep} perform IM over the subcarrier indices in orthogonal frequency division multiplexing (OFDM) systems. This approach is further extended in \cite{im_irs_subcarrier2_Hodge2023Oct} to perform IM over multiple domains, i.e., subcarriers, time slots and antennas. This approach is also referred to as multi-dimensional IM~\cite{im_irs_multi_dim_Tusha2020Dec}. {Further exploitation of IM by utilizing the RIS also includes differential chaotic shift keying (DCSK) \cite{ref_R2_1_10844041,ref_R2_2_10086558}. }  {Also in \cite{ref_R1_4_9217944,ref_R1_3_10640072}, The RIS configuration is utilized and reflection pattern modulation techniques have been employed. }
	
	The aforementioned works mostly consider the RIS configurations for IM while the spatial diversity of the wireless channel is not exploited. This is especially useful in millimeter wave (mmWave) channels as they offer non-line-of-sight (nLoS) and multipath rich propagation environments in  massive multiple-input multiple-output (mMIMO) systems~\cite{valenzuela_Saleh2003Jan,rappaport2017Aug}. In spatial path IM (SPIM), the spatial paths between the BS and the user through the RIS can be exploited and the index bits can be assigned to the spatial patterns that involve the selected paths. Compared to other IM techniques, SPIM is shown to be more robust against the performance loss due to the sacrificing the unused transmission media during information embedding~\cite{spim_GBMM_Guo2019Jul,spim_bounds_JSTSP_Wang2019May,elbir_IM_ISAC_Elbir2024Nov}. In particular, authors in \cite{spim_GBMM_Guo2019Jul} have shown that the SPIM-aided communication with hybrid analog/digital beamforming can achieve higher SE than that of the use of fully digital (FD) beamformers. {The outperforming performance of hybrid beamforming with IM over the use of FD beamformers is attributed to delivering additional information bits via IM, which improves the overall communication performance in terms of SE.} Similar observations have also been achieved in \cite{spim_bounds_JSTSP_Wang2019May,elbir_IM_ISAC_Elbir2024Nov} and \cite{elbir2024Mar_SPIM_ISAC}. In particular, the SPIM-aided communication is shown to be  advantageous if the channel gains of the spatial paths used for SPIM are relatively close \cite{spim_bounds_JSTSP_Wang2019May}. This is intuitively useful as the SE of the communication system with/without SPIM directly depends on the channel gains. Furthermore, the design of hybrid beamformers for maximizing the SE, the analog beamformer components correspond to the path directions with strongest channel gains. As a result, SPIM techniques have the advantage of having full access of spectrum and antenna array while performing IM among the  spatial paths with comparable channel gains. The SPIM has been studied in  \cite{spim_BIM_TVT_Ding2018Mar} by designing the analog beamformers in accordance with the spatial patterns for a single-user with a single radio frequency (RF) chain scenario. In other words, only one of the spatial paths is selected to represent the index bits for each spatial pattern. Also in \cite{spim_bounds_JSTSP_Wang2019May} a single RF chain scenario is considered, and the theoretical analysis on the comparison of SPIM with conventional hybrid beamforming is presented. The same approach is extended in \cite{spim_GBMM_Guo2019Jul} for multiple RF chains, and the spatial patterns are derived from the FD beamformer. Instead of conventional antenna arrays, beamspace modulation was exploited  in\cite{spim_GBM_Gao2019Jul}  by employing lens arrays at both transmitter and receiver. The further extensions of the SPIM approach include by applying distributed machine learning \cite{spim_FL_Elbir2021Jun} and  integrated sensing and communication  \cite{elbir2024Mar_SPIM_ISAC}. 
	
	In summary, the aforementioned SPIM works~\cite{spim_BIM_TVT_Ding2018Mar,spim_bounds_JSTSP_Wang2019May,spim_FL_Elbir2021Jun,elbir2024Mar_SPIM_ISAC,spim_GBM_Gao2019Jul,spim_GBMM_Guo2019Jul,elbir_IM_ISAC_Elbir2024Nov} consider the mmWave mMIMO scenario without exploiting the advantage of RIS for EE and network coverage. Furthermore, most of these works assume single-user scenario and do not consider the multi-user SPIM-aided mMIMO, for which the derivation of the SE is rather challenging. While there exist IM techniques for RIS-aided mMIMO systems, SPIM is not considered. Instead, these approaches employ different transmission entities, i.e., the RIS element configurations \cite{irs_reflectionMod_Guo2020Jul,im_irs_grouping_Asmoro2022Sep,im_irs_offset_Zhang2024Mar,irs_BIM_Gopi2020Oct,im_irs_superimposed_Yao2023Aug,im_RIS_elements_Matemu2024Nov}, antennas \cite{im_RIS_antennas_Basar2020Feb,im_irs_groupoing_receive_antenna_Zhang2021Oct} or  subcarriers \cite{im_irs_subcarrier_Hodge2020Sep,im_irs_subcarrier2_Hodge2023Oct,im_irs_multi_dim_Tusha2020Dec} for IM.  Therefore, the motivation of this work is improve the SE of the RIS-aided mMIMO systems via  SPIM techniques.

\begin{figure*}[t]
	\centering
	{\includegraphics[draft=false,width=.8\textwidth]{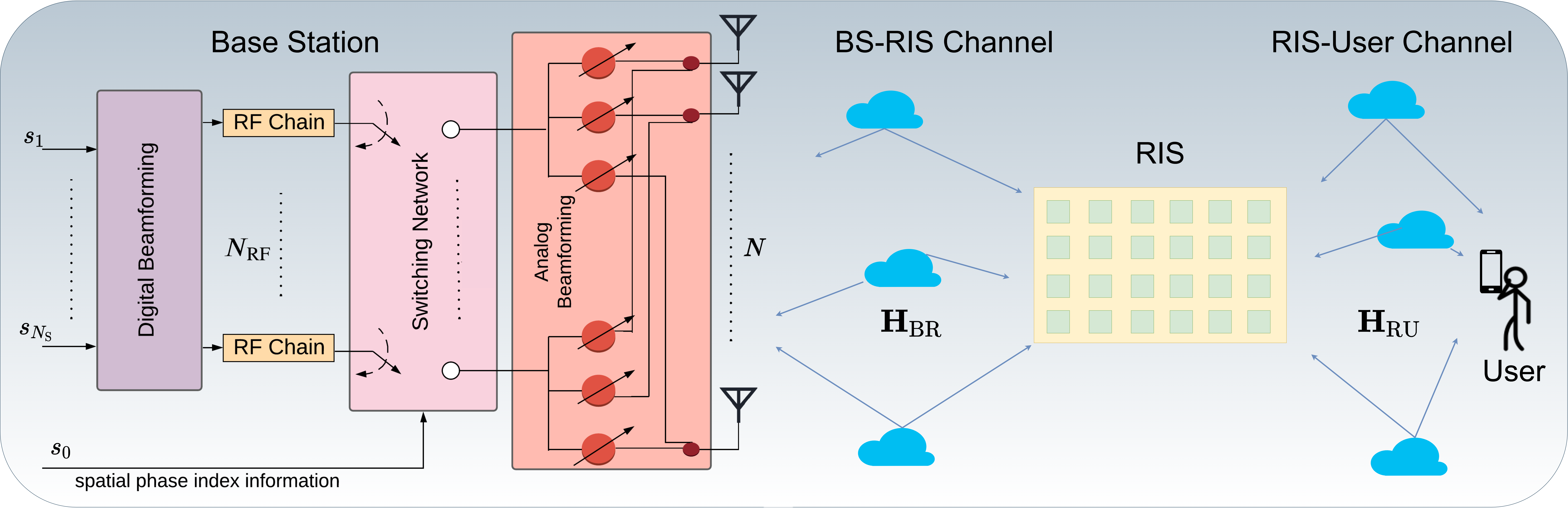} } 
	\caption{SPIM for RIS-aided mMIMO system, wherein the BS processes the incoming data-streams and employs spatial path index information $s_0$ in a switching network, which connects $N_\mathrm{RF}$ RF chains to $L$ taps on the analog beamformer to exploit $L_\mathrm{S}$ out of $L$ paths for communication.{In UL, the estimated spatial path directions at the user are fed-back to the BS which performs SPIM beamformer design. In DL, the user measures the path strengths to detect the spatial patterns.     }    }
	\label{fig_IRS_MIMO}
\end{figure*}

\subsection{Our Contributions}

In this paper, we introduce an SPIM-based hybrid beamforming approach for RIS-aided mMIMO systems, illustrated in Fig.~\ref{fig_IRS_MIMO}. 		In this regard, we first introduce the system design for single-user scenario. Then, the proposed method is extended for multi-user RIS-aided mMIMO systems.  In each scenario, the diversity of the spatial paths in the mmWave channel is exploited for the users. In particular, a portion of the spatial paths between the BS and users through RIS are selected for each spatial pattern and their indices are assigned to index bits.	Via numerical simulations and theoretical analysis, we have shown that the proposed SPIM approach based on hybrid beamforming achieve higher SE than that of FD beamforming thanks to delivering additional index bits.	Our main contributions are summarized as follows:
\begin{enumerate}
	\item Contrary to the existing works, we leverage SPIM for RIS-aided  mMIMO systems in order to obtain higher SE while extending the coverage of the wireless network. To maximize the SE of the overall system, the optimization problem is handled with two subproblems for designing the hybrid analog and digital beamformers at the BS, and finding the phase shift configuration of RIS elements. 
	\item In order to design the hybrid beamformers at the BS, the channels for BS-RIS and RIS-user links are estimated through  channel acquisition in the uplink. Then, the analog beamformer at the BS is designed such that the columns of analog beamformer are selected as the steering vectors corresponding to the channel path directions for the selected spatial pattern. Next, the RIS phase elements are estimated for the given SPIM scenario. In particular, the SE expression for FD beamforming is utilized to obtain a convex optimization problem, and the RIS phase elements are found. 
	\item We expand our SPIM approach for multi-user scenario, and derive the SE expression by taking into account the contribution of the index bits together with the conventional APM bits.
	\item Based on the proposed SPIM approach, we conduct theoretical performance analysis on the SE, and derive the SE relationship between the FD beamforming and the SPIM-aided hybrid beamforming. We further validate our theoretical analysis by numerical simulations showing that our SPIM approach achieves higher ($\sim 20\%$) SE than that of FD beamforming.
	
	
\end{enumerate}

\subsection{Notation} Throughout the paper,  $(\cdot)^\textsf{T}$ and $(\cdot)^{\textsf{H}}$ denote the transpose and conjugate transpose operations, respectively. For a matrix $\mathbf{A}$ and vector $\mathbf{a}$; $[\mathbf{A}]_{ij}$, $[\mathbf{A}]_k$  and $[\mathbf{a}]_l$ correspond to the $(i,j)$-th entry, $k$-th column and $l$-th entry, respectively. $\lfloor\cdot \rfloor$  represents the flooring  operation. We denote $|| \cdot||_2$ and $|| \cdot||_\mathcal{F}$ as the  $l_2$-norm and Frobenious norm, respectively.

%

%

\section{System Model }
Consider an RIS-assisted mmWave massive MIMO system, as shown in Fig.~\ref{fig_IRS_MIMO}, wherein the BS is equipped with a uniform linear array (ULA) of $N$ antennas with hybrid beamforming structure, which consists of $N_\mathrm{RF}$ RF chains. The BS communicates with the user employing $\bar{N}$ antennas through an RIS while the direct LoS link is blocked. The RIS has a uniform planar array (UPA) structure on $yz$-plane, and it is composed of $M$ passive reflecting surface elements to assist the communication. In the considered scenario, the BS communicates with the user via $N_\mathrm{S}$ data-streams. In particular, the BS sends $\mathbf{s} = [s_1,\cdots, s_{N_\mathrm{S}}]^\textsf{T}$ ($\mathbb{E}\{\mathbf{ss}^\textsf{H} \} = \mathbf{I}_{N_\mathrm{S}}$), via conventional APM schemes.  In addition to $\mathbf{s}$, the spatial path index information, which is represented by $s_0$, is fed to a switching network as shown in Fig~\ref{fig_IRS_MIMO}, to assign the outputs of $N_\mathrm{RF}$ RF chains to $L$ taps of the analog beamformer. Here, $L$ is defined as the total number of spatial paths incoming to the BS~\cite{elbir_IM_ISAC_Elbir2024Nov}. Here, $L = \mathrm{min}\{L_\mathrm{BR},L_\mathrm{RU}\}$, where $L_\mathrm{BR}$ and $L_\mathrm{RU}$ denote the number paths for the BS-RIS and RIS-user links. In order to perform SPIM, we assume that $L_\mathrm{S}$ out of $L$ paths are selected for communication. As a result, we have $N_\mathrm{RF} = L_\mathrm{S}$. Consequently, performing SPIM yields $\footnotesize \left(\begin{array}{c}
	L \\
	L_\mathrm{S}
\end{array}\right)$ choices of connection to exploit the diversity of the spatial paths. Thus, the BS can process at most $L_\mathrm{S}$ inputs, each of which is connected to $N> L$ antennas via phase shifters. Next, we define the total number of spatial patterns as
$\footnotesize S = 2^{\left\lfloor \log_2 \left(\begin{array}{c}
		L \\
		L_\mathrm{S}
	\end{array}\right)\right\rfloor},$
which implies that the spatial domain can carry at most ${\footnotesize \left\lfloor \log_2 \left(\begin{array}{c}
		L \\
		L_\mathrm{S}
	\end{array}\right)\right\rfloor}$ bits of information. {We also note that this switching operation at the precoders and at the RIS phase switches needs to be conducted in accordance with the symbol duration, for which low-cost switching components with the speed of nanoseconds are available~\cite{spim_GBMM_Guo2019Jul,heath2016overview}. Therefore, accurate switching performance of the beamformer, especially analog beamformer with phase-shifters, is of great importance. In~\cite{switch_phase_shifter_2_10354030}, the high speed switching performance is achieved for the design of phase shifters  with 2.1 dB insertion loss on 26 GHz band. Also in~\cite{spim_GBMM_Guo2019Jul}, low-cost phase shifter designs are presented with the switching speed in the range of tens of nanoseconds.  Thanks to these high speed switching capabilities, the implementation of SPIM is considered to be feasible.} As a special case, when $L_\mathrm{S} = L$, the presented SPIM configuration reduces to a conventional mmWave system as there is only a single choice of communication.

	
	For the $i$-th spatial pattern ($i = 1,\cdots, S$), the BS processes the data-streams $\mathbf{s}$ via baseband beamformer $\mathbf{B}_i\in \mathbb{C}^{N_\mathrm{RF}\times N_\mathrm{S}}$ and analog beamformer $\mathbf{A}_i\in \mathbb{C}^{N\times N_\mathrm{RF}}$, where $N_\mathrm{S}\leq N_\mathrm{RF}\leq N$. The analog beamformer has constant-modulus constraint as $|[\mathbf{A}_i]|_{n,r} = 1/\sqrt{N}$ for $n = 1,\cdots, N$ and $r = 1,\cdots, N_\mathrm{RF}$. Furthermore, we have the power constraint as $\|\mathbf{A}_i \mathbf{B}_i\|_\mathcal{F}^2 = N_\mathrm{S}$. As a result, the $N\times 1$ transmitted signal becomes
	\begin{align}
		\mathbf{x}_i = \mathbf{A}_i\mathbf{B}_i\mathbf{s},
	\end{align}
	which propagates through the mmWave channel via BS-RIS and RIS-UE links.	In mmWave transmission,  the wireless channel exhibits a sparse multipath structure, and it is usually characterized as the contribution of $L$ path components by the Saleh-Valenzuela (SV) channel model~\cite{valenzuela_Saleh2003Jan}. Thus, we define the downlink channels for the BS-RIS and RIS-user links as $\mathbf{H}_\mathrm{BR}\in\mathbb{C}^{M\times N}$ and $\mathbf{H}_\mathrm{RU}\in \mathbb{C}^{\bar{N}\times M}$, respectively. 
	In RIS-assisted scenario, the RIS elements in the RIS combines the incoming signals and re-scatter them with adjustable phase shifts toward the user~\cite{ris_Design_SU2_Hong2022Mar,ris_Design_SU_Zhu}. Define $\psi_m$ as the phase shift introduced by the $m$-th RIS element, where $\psi_m \in \{\frac{2\pi b}{2^\Delta} | b = 0,\cdots, 2^\Delta -1  \}$ is the set of discrete RIS phases. Then, we define $\boldsymbol{\Psi} = \mathrm{diag}\{\boldsymbol{\psi}\}\in \mathbb{C}^{M\times M}$  as the phase shift matrix
	where $\boldsymbol{\psi} = \left[e^{\mathrm{j}\psi_1},\cdots, e^{\mathrm{j}\psi_M}\right]^\textsf{T}$. Thus, the ${\bar{N} \times N}$ cascaded downlink channel matrix $\mathbf{H}$ from BS to the user through RIS is defined as
	\begin{align}
		\mathbf{H}=  \mathbf{H}_\mathrm{RU}\boldsymbol{\Psi}\mathbf{H}_\mathrm{BR}.
		\label{channelCascaded}
	\end{align}
	Finally,		the $\bar{N}\times 1$ received signal for the $i$-th spatial pattern at the user becomes		 
	\begin{align}
		{\mathbf{y}}_i =  \mathbf{H}\mathbf{A}_i\mathbf{B}_i\mathbf{s} + \mathbf{n}, \label{receivedSignal1}
	\end{align}
	where  {$\mathbf{n}\in \mathbb{C}^{\bar{N}}$ denotes the zero-mean AWGN vector, i.e.,  $\mathbf{n}\sim \mathcal{CN}(\textbf{0},\sigma_n^2\textbf{I}_{\bar{N}})$ with variance $\sigma_n^2$}.

	%

	\section{SE Analysis \& Problem Formulation}
	We aim to design the hybrid beamformers $\mathbf{A}_i$ and $\mathbf{B}_i$ so that we maximize the SE, which is characterized by the mutual information  (MI) of the communication links between the BS and the user~\cite{spim_AsymptoticMI_He2017Nov,spim_GBMM_Guo2019Jul}. In the following, we first discuss the SE of  the SPIM-aided mMIMO system, i.e., $I_\mathrm{SPIM}$, and the conventional mmWave mMIMO system, i.e., $I_\mathrm{MIMO}$. Then, we introduce the hybrid beamforming design problem.
	
	\subsection{SE for SPIM mMIMO}
	Define  $\mathbf{F}_i \in \mathbb{C}^{N\times N_\mathrm{RF}}$ and $\mathbf{x}_i\in \mathbb{C}^{N}$ as the hybrid beamformer for the $i$-th spatial pattern, i.e., $\mathbf{F}_i = \mathbf{A}_i\mathbf{B}_i$, and the transmit signal, i.e., $\mathbf{x}_i = \mathbf{F}_i\mathbf{s}$, respectively. Then, the SE of the SPIM-aided system is given by
	\begin{align}
		I_\mathrm{SPIM} = I({\mathbf{y}}_{i};\mathbf{x}_i, \mathbf{F}_{i}), \label{SE_10}
	\end{align}
	where $I({\mathbf{y}}_{i};\mathbf{x}_i|\mathbf{F}_{i})$ represents the MI between the $i$-th spatial pattern and the received signal ${\mathbf{y}}_{i}$, and it can be expressed as
	\begin{align}
		I({\mathbf{y}}_{i};\mathbf{x}_i, \mathbf{F}_{i}) = I({\mathbf{y}}_{i};\mathbf{x}_i|\mathbf{F}_{i}) + I({\mathbf{y}}_{i};\mathbf{F}_{i}), \label{SE_11}
	\end{align}
	where $I({\mathbf{y}}_{i};\mathbf{x}_i|\mathbf{F}_{i})$ and $I({\mathbf{y}}_{i};\mathbf{F}_{i})$ correspond to the information bits conveyed via conventional APM and SPIM, respectively. In particular, $I({\mathbf{y}}_{i};\mathbf{x}_i|\mathbf{F}_{i})$ corresponds to the conventional symbol transmission, and it can be quantified by using Shannon's continuous-input continuous-output memoryless channel's (CCMC) capacity\footnote{While CCMC is considered in this work, the discrete-input continuous-output memoryless channel (DCMC) capacity may also be taken into account as DCMC utilizes the choice of specific digital modulation schemes~\cite{ref_R1_1_1608632,ref_R1_2_10250854}.    } as~\cite{capacity_Telatar1999Nov}
	\begin{align}
		I({\mathbf{y}}_{i};\mathbf{x}_i|\mathbf{F}_{i}) = \frac{1}{S}\sum_{i = 1}^{S}\log_2 \left(\left| \frac{\mathbf{M}_i}{\sigma_n^2}\right|\right), \label{SE_12}
	\end{align}
	where $	\mathbf{M}_i \in \mathbb{C}^{\bar{N}\times \bar{N}}$ is the covariance matrix of the received signal in (\ref{receivedSignal1}) as 
	\begin{align}
		\mathbf{M}_i = \sigma_n^2\mathbf{I}_{\bar{N}} + \frac{1}{ N_\mathrm{S}} \mathbf{H}\mathbf{A}_i\mathbf{B}_i \mathbf{B}_i^\textsf{H}\mathbf{A}_i^\textsf{H}\mathbf{H}^\textsf{H}. \label{m_i}
	\end{align}
	Next, we consider the second term in (\ref{SE_11}), for which there is no closed-form expression for $I({\mathbf{y}}_{i};\mathbf{F}_{i})$. However, it is lower-bounded by $I_\mathrm{LB}({\mathbf{y}}_{i};\mathbf{F}_{i})$, which is defined as
	\begin{align}
		I_\mathrm{LB}({\mathbf{y}}_{i};\mathbf{F}_{i}) = &\log_2 (S) - \bar{N}\log_2 (e) \nonumber \\
		& - \frac{1}{S}\sum_{i=1}^{S} \log_2 \left(\sum_{j = 1}^{S} \frac{|\mathbf{M}_i|}{|\mathbf{M}_i + \mathbf{M}_j|} \right), \label{SE_13}
	\end{align}
	which is shown to be a tight approximation of $I({\mathbf{y}}_{i};\mathbf{F}_{i})$~\cite{spim_SE_He2017May}. {Following (\ref{SE_11}), in order to obtain the SE expression for SPIM, we combine (\ref{SE_13}) and  (\ref{SE_12}) as  
			\begin{align}
			&I_\mathrm{SPIM} = I({\mathbf{y}}_{i};\mathbf{x}_i|\mathbf{F}_{i}) + 	I_\mathrm{LB}({\mathbf{y}}_{i};\mathbf{F}_{i}).
					\end{align}
					Thus, we first rewrite (\ref{SE_12}) as
						\begin{align}
						\frac{1}{S}\sum_{i = 1}^{S}\log_2 \left(\left| \frac{\mathbf{M}_i}{\sigma_n^2}\right|\right) =  \log_2 \left(\frac{1}{\sigma_n^{^{2\bar{N}} }}\right)  + \frac{1}{S} \sum_{i = 1}^S \log_2 | \mathbf{M}_i|, \label{SE_22}
					\end{align}
					and the last term in the right hand side of (\ref{SE_13}) as 
						\begin{align}
					&	\frac{1}{S}\sum_{i=1}^{S} \log_2 \left(\sum_{j = 1}^{S} \frac{|\mathbf{M}_i|}{|\mathbf{M}_i + \mathbf{M}_j|} \right) \nonumber  \\
						& = \frac{1}{S} \sum_{i = 1}^S \log_2 | \mathbf{M}_i|  + \frac{1}{S}\sum_{i=1}^{S} \log_2 \left(\sum_{j = 1}^{S} \frac{|\mathbf{M}_i|}{|\mathbf{M}_i + \mathbf{M}_j|} \right). \label{SE_23}
					\end{align}
					Then, using (\ref{SE_22}) and (\ref{SE_23}), we combine (\ref{SE_13}) and  (\ref{SE_12}) as 
			\begin{align}
					I_\mathrm{SPIM} =& \log_2(S) +  \log_2 \left(\frac{1}{\sigma_n^{^{2\bar{N}} }}\right) + \log_2 \left(2 ^{- \bar{N}}\right) \nonumber \\
					& -\frac{1}{S}\sum_{i=1}^{S} \log_2 \left(\sum_{j = 1}^{S} \frac{|\mathbf{M}_i|}{|\mathbf{M}_i + \mathbf{M}_j|} \right),
		\end{align}
		wherein the constant gap term, i.e., $\bar{N}(1-\log_2 (e))$ is deleted to compensate the asymptotic approximation error~\cite{spim_AsymptoticMI_He2017Nov}. Finally, the  SE of the SPIM-aided system is given by
		\begin{align}
			I_\mathrm{SPIM} = \log_2\left(\frac{S}{(2\sigma_n^2)^{\bar{N}}}\right)   -  \frac{1}{S}  \sum_{i = 1}^{S}  \log_2  \left(\sum_{j = 1}^{S}\frac{1}{ | \mathbf{M}_i + \mathbf{M}_j |  }  \right). \label{SE_14}
		\end{align}

}

	\subsection{SE for mMIMO}
	The SE of the conventional mmWave mMIMO systems performing only APM is achieved by steering the output of the RF chains to the strongest paths. Therefore, according to Shannon's formula, 	the SE of the mMIMO system is given by
	\begin{align}
		I_\mathrm{MIMO} = \log_2 \left(\frac{1}{\sigma_n^2}\left| \mathbf{M}_1 \right|\right). \label{se_mmWave}
	\end{align}
	Since there is no SPIM, there is only a single choice of transmission, i.e., $S = 1$, for which we have $I_\mathrm{SPIM} = I_\mathrm{MIMO}$, as shown in the following lemma.
	\begin{lemma}
		\label{lemma1}
		The SE of the SPIM-aided system is equivalent to the SE of  conventional mMIMO system if $S = 1$.
	\end{lemma}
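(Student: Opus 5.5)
The plan is to specialize the closed-form SPIM expression (\ref{SE_14}) to a single spatial pattern and simplify it until it coincides with (\ref{se_mmWave}). No earlier result beyond the derivation leading to (\ref{SE_14}) is needed. When $S=1$ there is only one pattern, $i=j=1$, so the double sum in (\ref{SE_14}) has the single term $1/|\mathbf{M}_1+\mathbf{M}_1| = 1/|2\mathbf{M}_1|$, and the prefactor $1/S$ equals one.

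Next I will use the determinant scaling identity for the $\bar{N}\times\bar{N}$ matrix $\mathbf{M}_1$, namely $|2\mathbf{M}_1| = 2^{\bar{N}}|\mathbf{M}_1|$. This gives
\begin{align}
I_\mathrm{SPIM} &= \log_2\left(\frac{1}{(2\sigma_n^2)^{\bar{N}}}\right) + \log_2\left(2^{\bar{N}}|\mathbf{M}_1|\right) \nonumber \\
&= \log_2\left(\frac{|\mathbf{M}_1|}{\sigma_n^{2\bar{N}}}\right) = \log_2\left(\left|\frac{\mathbf{M}_1}{\sigma_n^2}\right|\right). \nonumber
\end{align}
The factors $2^{\bar{N}}$ cancel exactly, which is precisely why the $2^{-\bar{N}}$ appears in the first term of (\ref{SE_14}).

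As a sanity check, I will confirm term by term that the index-bit contribution vanishes when $S=1$. Here $\log_2 S=0$, and (\ref{SE_13}) becomes $-\bar{N}\log_2(e) - \log_2(|\mathbf{M}_1|/|2\mathbf{M}_1|) = \bar{N}(1-\log_2 e)$. This is exactly the constant gap that was removed to compensate the asymptotic approximation error. The remaining APM part, (\ref{SE_12}) with $S=1$, is then $\log_2|\mathbf{M}_1/\sigma_n^2|$, the Shannon rate of the single beamformer.

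The only real obstacle is notational. In (\ref{se_mmWave}) the noise normalization is written outside the determinant. It must be read as $|\mathbf{M}_1/\sigma_n^2| = \sigma_n^{-2\bar{N}}|\mathbf{M}_1|$, consistent with (\ref{SE_12}), for the identity to hold exactly. I will state this interpretation explicitly and note that, with the noise normalization per receive dimension, $I_\mathrm{SPIM}=I_\mathrm{MIMO}$, which proves the lemma.
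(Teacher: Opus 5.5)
Your proposal is correct and follows the same route as the paper's proof in Appendix A: both specialize (\ref{SE_14}) to $S=1$, use $|2\mathbf{M}_1| = 2^{\bar{N}}|\mathbf{M}_1|$, and cancel the $2^{\bar{N}}$ factors to arrive at $\log_2|\mathbf{M}_1/\sigma_n^2|$. Your explicit reading of the normalization in (\ref{se_mmWave}) as $\sigma_n^{-2\bar{N}}$ matches what the paper's own intermediate steps yield, even though its final line writes the result loosely as $\log_2\left(\frac{1}{\sigma_n^2}|\mathbf{M}_1|\right)$.
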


	\begin{proof}
		The proof is provided in Appendix~\ref{proofL1}.
	\end{proof}
	
	In order to get more insight, we further define the FD beamformer and show the relationship between the SE of SPIM-aided and FD mMIMO systems. We first define the SVD of the cascaded channel of the RIS-mMIMO system as $			\mathbf{H} = \mathbf{U}\boldsymbol{\Sigma}\mathbf{V}^\textsf{H},$
	where $\mathbf{U}\in \mathbb{C}^{\bar{N}\times \bar{N}}$ and $\mathbf{V}\in \mathbb{C}^{N\times N}$ are left and right singular value matrices, respectively~\cite{heath2016overview}. $\boldsymbol{\Sigma} = \left[\begin{array}{cc}
		\boldsymbol{\Sigma}_1 & \mathbf{0} \\
		\mathbf{0} & \boldsymbol{\Sigma}_2
	\end{array}\right]  \in \mathbb{C}^{\bar{N}\times N}$ is composed of the singular values in descending order, for which $\boldsymbol{\Sigma}_1\in \mathbb{C}^{N_\mathrm{S}\times N_\mathrm{S}}$ includes the $N_\mathrm{S}$ largest singular values. Now, the FD beamformer $\mathbf{F}\in \mathbb{C}^{N\times N_\mathrm{S}}$ can be defined as columns of the singular matrix corresponding to the $N_\mathrm{S}$ largest singular values, i.e., $\mathbf{F} = \mathbf{V}_1$. Then, the SE for the FD beamforming is given by
	\begin{align}
		I_\mathrm{FD} = \log_2\left( \left| \mathbf{I}_{N_\mathrm{S}} + \frac{1}{\sigma_n^2 N_\mathrm{S}} \boldsymbol{\Sigma}_1^2  \right|  \right). \label{se_fd}
	\end{align}
	
	In the following theorem, we show the relationship between the SE of SPIM-aided and FD mMIMO systems. We later validate the theoretical results with numerical experiments and show that the proposed SPIM approach achieves higher SE than that of the FD mMIMO beamforming.
	
	\begin{theorem}
		\label{thereom1}
		Let $	I_\mathrm{SPIM}$ and $I_\mathrm{FD}$ be the SE of the mMIMO communication systems with SPIM hybrid beamforming and FD beamforming. Then, they have the following relationship, i.e.,
		\begin{align}
			I_\mathrm{SPIM} -I_\mathrm{FD}\geq   \log_2\left(\frac{S}{4}\right) -N_\mathrm{S} -\tau, \label{thereom1eq}
		\end{align}
		where $\tau = \frac{1}{S}\log_2\left(\prod_{i = 1}^{S}\sum_{j = 1}^S 2^{-(  u_i+u_j)}  \right)$ for $u_z = ||\mathbf{V}_1^\textsf{H}\mathbf{A}_z \mathbf{B}_z ||_\mathcal{F}^2 $ for $z \in \{i,j\}$.
	\end{theorem}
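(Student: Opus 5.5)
The plan is to start from the closed form (\ref{SE_14}) and bound the only data-dependent part, namely the determinants $|\mathbf{M}_i+\mathbf{M}_j|$, from below by quantities involving $I_\mathrm{FD}$ and the projections $u_z$. Lower bounds on these determinants give upper bounds on $\sum_j 1/|\mathbf{M}_i+\mathbf{M}_j|$, and hence a lower bound on $I_\mathrm{SPIM}$.

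First I rewrite each sum as
\begin{align}
\mathbf{M}_i+\mathbf{M}_j = 2\sigma_n^2\Big(\mathbf{I}_{\bar{N}} + \tfrac{1}{2\sigma_n^2 N_\mathrm{S}}\mathbf{H}(\mathbf{F}_i\mathbf{F}_i^\textsf{H}+\mathbf{F}_j\mathbf{F}_j^\textsf{H})\mathbf{H}^\textsf{H}\Big),
\end{align}
where $\mathbf{F}_z=\mathbf{A}_z\mathbf{B}_z$. The factor $(2\sigma_n^2)^{\bar{N}}$ then cancels the same factor in the first term of (\ref{SE_14}). This leaves
\begin{align}
I_\mathrm{SPIM}=\log_2 S-\frac{1}{S}\sum_i\log_2\sum_j \Big|\mathbf{I}+\tfrac{1}{2\sigma_n^2N_\mathrm{S}}\mathbf{H}\mathbf{G}_{ij}\mathbf{H}^\textsf{H}\Big|^{-1},
\end{align}
with $\mathbf{G}_{ij}=\mathbf{F}_i\mathbf{F}_i^\textsf{H}+\mathbf{F}_j\mathbf{F}_j^\textsf{H}$.

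Next I substitute the SVD $\mathbf{H}=\mathbf{U}\boldsymbol{\Sigma}\mathbf{V}^\textsf{H}$ and use Sylvester's identity. I keep only the dominant block: the determinant can only decrease when the contribution of $\boldsymbol{\Sigma}_2$ is discarded. This gives the lower bound
\begin{align}
\Big|\mathbf{I}_{N_\mathrm{S}}+\tfrac{1}{2\sigma_n^2N_\mathrm{S}}\boldsymbol{\Sigma}_1\mathbf{V}_1^\textsf{H}\mathbf{G}_{ij}\mathbf{V}_1\boldsymbol{\Sigma}_1\Big|.
\end{align}

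The key step is to relate this determinant to $|\mathbf{I}+\frac{1}{\sigma_n^2N_\mathrm{S}}\boldsymbol{\Sigma}_1^2|=2^{I_\mathrm{FD}}$ times a factor driven by $u_i+u_j=\mathrm{tr}(\mathbf{V}_1^\textsf{H}\mathbf{G}_{ij}\mathbf{V}_1)$.

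I would try to justify an inequality of the form
\begin{align}
\Big|\mathbf{I}+\boldsymbol{\Sigma}_1\mathbf{P}\boldsymbol{\Sigma}_1\Big|\gtrsim 2^{-N_\mathrm{S}-2}\,\Big|\mathbf{I}+\boldsymbol{\Sigma}_1^2\Big|\,2^{\mathrm{tr}\,\mathbf{P}},
\end{align}
with the appropriate scalings. The intended tools are:
\begin{enumerate*}[label=(\roman*)]
\item Minkowski's determinant inequality together with $\log(1+ab)\ge\log(1+a)+\log b$-type splittings eigenvalue by eigenvalue;
\item the fact that $\mathbf{P}=\mathbf{V}_1^\textsf{H}\mathbf{G}_{ij}\mathbf{V}_1\preceq 2\mathbf{I}$ under the power normalization, so that the concavity bound $\log_2(1+x)\ge x$ on $[0,1]$ applies to the normalized eigenvalues;
\item AM–GM to convert the product of eigenvalue factors into $2^{\mathrm{tr}}$.
\end{enumerate*}
The factor-of-2 scalings from the $2\sigma_n^2$ normalization and from the bound $\mathbf{P}\preceq2\mathbf{I}$ should account for the constants $-N_\mathrm{S}$ and $\log_2(1/4)$ in (\ref{thereom1eq}).

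Assuming that inequality, I obtain
\begin{align}
\sum_j|\cdot|^{-1}\le 2^{N_\mathrm{S}+2}\,2^{-I_\mathrm{FD}}\sum_j2^{-(u_i+u_j)}.
\end{align}
Taking $\log_2$, averaging over $i$, and writing the average of logs as the log of the product yields exactly $\tau$. This gives $I_\mathrm{SPIM}\ge \log_2 S-N_\mathrm{S}-2+I_\mathrm{FD}-\tau$, which is (\ref{thereom1eq}).

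The main obstacle is the determinant step. It needs a rigorous eigenvalue-wise inequality that produces the clean exponent $u_i+u_j$ and the constants $N_\mathrm{S}+2$, and making those constants come out exactly may require loosening the bound (e.g. using $1+ab\ge\frac12(1+a)\cdot\min(1,b)$ and then $\min(1,b)\ge 2^{b-1}$ for $b\in[0,2]$).
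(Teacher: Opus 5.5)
Your opening reductions are fine. Cancelling $(2\sigma_n^2)^{\bar{N}}$ in (\ref{SE_14}) is exact. Discarding the $\boldsymbol{\Sigma}_2$ block gives a valid lower bound by Cauchy interlacing. Your final step, turning averaged logarithms into $\tau$, matches the paper's last line.

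The gap is the key determinant step, and it cannot be fixed by adjusting constants. With your normalization you need, for every pair $(i,j)$,
\[
\Big|\mathbf{I}_{N_\mathrm{S}}+\tfrac{1}{2\sigma_n^2N_\mathrm{S}}\boldsymbol{\Sigma}_1\mathbf{P}\boldsymbol{\Sigma}_1\Big|\;\geq\;2^{-N_\mathrm{S}-2}\,2^{I_\mathrm{FD}}\,2^{\mathrm{tr}\,\mathbf{P}},\qquad \mathbf{P}=\mathbf{V}_1^\textsf{H}\mathbf{G}_{ij}\mathbf{V}_1 .
\]
This inequality fails in several places:
\begin{itemize}
\item For $N_\mathrm{S}=1$, $\mathbf{P}=p$ and $a=\sigma_1^2/\sigma_n^2$, it reads $1+ap/2\geq\tfrac18(1+a)2^{p}$. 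This fails at $p=0$ once $a>7$, and at high SNR for every $p\lesssim 0.31$. The left side grows with SNR only through $\mathbf{P}$, while the right side grows like $2^{I_\mathrm{FD}}$ regardless, so no constant can work.
\item At the other extreme, $\mathbf{P}=2\mathbf{I}$ (all $\mathbf{F}_z=\mathbf{V}_1$), it becomes $2^{I_\mathrm{FD}}\geq 2^{N_\mathrm{S}-2}2^{I_\mathrm{FD}}$, which is false for $N_\mathrm{S}\geq 3$.
\item Your fallback does not help: $\min(1,b)\geq 2^{b-1}$ holds on $[0,2]$ only at $b=1$. At $b=0$ it reads $0\geq\tfrac12$, and at $b=2$ it reads $1\geq 2$.
\item The power constraint $\|\mathbf{F}_z\|_\mathcal{F}^2=N_\mathrm{S}$ gives only $\mathbf{P}\preceq 2N_\mathrm{S}\mathbf{I}$, not $\mathbf{P}\preceq 2\mathbf{I}$.
\end{itemize}

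No chain of exact inequalities can reach (\ref{thereom1eq}), because it is not an exact consequence of (\ref{SE_14}). Take the following setup:
\begin{itemize}
\item $S=2$, $N_\mathrm{S}=\bar{N}=1$, $\sigma_n^2=1$;
\item $\mathbf{H}=g_1\mathbf{a}_1^\textsf{H}+g_2\mathbf{a}_2^\textsf{H}$ with orthonormal steering vectors, and $\mathbf{F}_z=\mathbf{a}_z$;
\item $|g_1|^2=g\to\infty$ and $|g_2|^2\to 0$.
\end{itemize}
Then $u_1\to 1$, $u_2\to 0$ and $\tau\to\tfrac12\log_2\tfrac98$, so the right-hand side stays near $-2.08$. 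However, $I_\mathrm{SPIM}=\tfrac12\log_2 g+O(1)$ while $I_\mathrm{FD}\to\log_2(1+g)$, so $I_\mathrm{SPIM}-I_\mathrm{FD}\to-\infty$.

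The paper's proof does not contain an eigenvalue inequality of the kind you are looking for. It proceeds as follows:
\begin{itemize}
\item It assumes every $\mathbf{F}_i$ is close to $\mathbf{V}_1$, i.e., the eigenvalues of $\mathbf{I}_{N_\mathrm{S}}-\mathbf{V}_1^\textsf{H}\mathbf{F}_i\mathbf{F}_i^\textsf{H}$ are small.
\item Under that assumption it imports the approximation of \cite{mimoRHeath}, $\log_2|\mathbf{M}_i/\sigma_n^2|\approx I_\mathrm{FD}-(N_\mathrm{S}-u_i)$.
\item It applies the same approximation to the pair term with $\mathbf{G}_{ij}$, writing $|\mathbf{M}_i+\mathbf{M}_j|$ as a constant times $2^{-(N_\mathrm{S}-u_i-u_j)}$.
\item It then gets $\log_2(S/4)$, $-N_\mathrm{S}$ and $-\tau$ by pure algebra: pulling $2^{N_\mathrm{S}}$ out of $\sum_j 2^{N_\mathrm{S}-u_i-u_j}$ and collecting factors of $2$.
\item The only genuine inequality it uses is $|2\mathbf{I}+\tfrac{1}{\sigma_n^2N_\mathrm{S}}\boldsymbol{\Sigma}_1^2|\geq|\mathbf{I}+\tfrac{1}{\sigma_n^2N_\mathrm{S}}\boldsymbol{\Sigma}_1^2|$.
\end{itemize}
What your argument lacks is that closeness hypothesis and the approximation built on it, which the theorem statement does not record. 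It is not a sharper determinant bound you are missing. The weak-path configuration above, with $u_2\approx 0$, is exactly where that hypothesis and the claimed bound break down.
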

	
	\begin{proof}
		The proof is provided in Appendix~\ref{proofT1}.
	\end{proof}
	
	\subsection{Problem Formulation}
	
	The hybrid beamformer design problem can be formed by maximizing the SE of the overall SPIM-aided RIS-mMIMO system, i.e.,
	\begin{subequations}
		\begin{align}		\maximize_{\mathbf{A}_i,\mathbf{B}_i,\boldsymbol{\Psi} } 	& \  I_\mathrm{SPIM} \nonumber \\
			\subjectto \ 	&\mathbf{A}_i \in \mathcal{A}, \mathbf{B}_i \in \mathcal{B}, \left|\left[\mathbf{A}_i\right]_{n,r} \right| = 1/\sqrt{N} \label{opt1_1} \\
			&\|\mathbf{A}_i \mathbf{B}_i  \|_\mathcal{F}^2 = N_\mathrm{S}, |[\boldsymbol{\psi}]_m| =1, \; \forall m \label{opt1_3} ,
		\end{align}
		\label{opt1}
	\end{subequations}
	where $\mathcal{A} = \{\mathbf{A}_1,\cdots, \mathbf{A}_S  \}$ and $\mathcal{B} = \{\mathbf{B}_1,\cdots, \mathbf{B}_S  \}$ denote the set of possible analog beamformers for SPIM and the corresponding baseband beamformers, respectively. The optimization problem in \eqref{opt1} is non-convex due to the constant-modulus constraint of the analog beamformers in \eqref{opt1_1} and the RIS phase terms in (\ref{opt1_3}), and nonlinear due to multiplications of multiple unknown variables $\mathbf{A}_i,\mathbf{B}_i$ and $\boldsymbol{\Psi}$. In order to solve \eqref{opt1} effectively, we present our solution in the following section.


	\section{SPIM for RIS-mMIMO}
	In case of no RIS, the maximization of $I_\mathrm{SPIM}$ in (\ref{opt1}) is equivalent to maximizing $I_\mathrm{MIMO}$ in (\ref{se_mmWave}) for all spatial patterns as the SE of SPIM corresponds to the contribution of all analog beamformer candidates in $\mathcal{A}$~\cite{spim_bounds_JSTSP_Wang2019May,spim_onGSM_He2017Sep}. This is done by first obtaining the set of possible analog beamformers, i.e., $\mathcal{A}$, and then, $I_\mathrm{SPIM}$ is computed by employing $\mathbf{M}_i$ for each $\mathbf{A}_i$, $i = 1,\cdots, S$. {In order to provide a solution for the RIS-assisted scenario, the joint problem in (\ref{opt1}) is  decomposed into two separate problems and solved alternatingly for the design of hybrid beamformers and the RIS phases. While the optimality of decoupling the problem is not guaranteed, this alternating approach has been shown to achieve convergence with sufficient accuracy~\cite{irs_CE_practical_MU_Di2020Feb,ris_Design_SU2_Hong2022Mar,ris_Design_SU_Zhu}.} In particular, we first design the hybrid beamformer $\mathbf{F}_i = \mathbf{A}_i \mathbf{B}_i$. Specifically, we design the columns of the analog beamformer $\mathbf{A}_i$ from the set of steering vectors corresponding to the path directions as this approach provides ``close-to-optimum" solution, thereby making the the eigenvalues of the matrix $\mathbf{I}_{N_\mathrm{S}} - \mathbf{V}_1^\textsf{H}\mathbf{F}_i\mathbf{F}_i^\textsf{H}$ small, i.e., $\mathbf{F}^\textsf{H}\mathbf{F}_i\mathbf{F}_i^\textsf{H}\mathbf{F} \approx \mathbf{I}_{N_\mathrm{S}}$~\cite{mimoRHeath,spim_bounds_JSTSP_Wang2019May}.  Then, in the second problem, we consider the optimization of the RIS phase matrix $\boldsymbol{\Psi}$.


	
	\subsection{Channel Acquisition and Beamformer Design}
	The maximization of the cost function in (\ref{opt1}) is achieved by designing the analog beamformer $\mathbf{A}_i$ from the steering vectors corresponding to the directions of the paths~\cite{spim_bounds_JSTSP_Wang2019May,spim_AsymptoticMI_He2017Nov}. Therefore, we design the analog the beamformer as the collection of the steering vectors corresponding to the incoming signal paths, which can be obtained from the uplink channel acquisition stage. Define the uplink channels ${\mathbf{G}}_\mathrm{BR}  = \mathbf{H}_\mathrm{BR}^\textsf{T} \in \mathbb{C}^{N\times M}$ and  ${\mathbf{G}}_\mathrm{RU}  = \mathbf{H}_\mathrm{RU}^\textsf{T} \in \mathbb{C}^{M\times \bar{N}}$ as
	\begin{align}
		{\mathbf{G}}_\mathrm{BR} &= \sqrt{\frac{N M}{L_\mathrm{BR}}} \sum_{\ell = 1}^{L_\mathrm{BR}} \alpha_\ell \mathbf{a}_{\mathrm{BS}}(\vartheta_\ell^r)\mathbf{a}_{\mathrm{RIS}}^\textsf{H}(\theta_\ell^t, \phi_\ell^t) ,  \\
		{\mathbf{G}}_\mathrm{RU} &= \sqrt{\frac{M\bar{N} }{L_\mathrm{RU}}} \sum_{\ell = 1}^{L_\mathrm{RU}} \beta_\ell \mathbf{a}_{\mathrm{RIS}}(\theta_\ell^r, \phi_\ell^r)\mathbf{a}_{\mathrm{UE}}^\textsf{H}(\vartheta_\ell^t) ,
		\label{channelModels}
	\end{align}
	where  $\theta_\ell^r$ ($\theta_\ell^t$) and $\phi_\ell^r$ ($\phi_\ell^t$) denote the azimuth and elevation angles of arrival (departure) associated with the RIS, $\vartheta_\ell^r$ ($\vartheta_\ell^t$) is the angle of arrival (departure) corresponding to the BS (user), and $\alpha_\ell$ ($\beta_\ell$) represents the channel gain, respectively. Furthermore, we assume $L_\mathrm{BR} = L_\mathrm{RU}$ for simplicity. In (\ref{channelModels}), $\mathbf{a}_{\mathrm{BS}}$, $\mathbf{a}_{\mathrm{UE}}$ and $\mathbf{a}_{\mathrm{RIS}}$ denote the $N\times 1$, $\bar{N}\times 1$ and $M\times 1$ normalized steering vectors at the BS, user and RIS, respectively. In particular, for $\mathbf{a}_{\mathrm{BS}}$ and $\mathbf{a}_{\mathrm{UE}}$ with ULA configuration,  the array steering vectors are given by
	\begin{align}
		\mathbf{a}_\mathrm{BS}(\vartheta^r) &= \frac{1}{\sqrt{N}} \left[1, e^{\mathrm{j} \frac{2\pi d}{\lambda}\sin \vartheta^r},\cdots, e^{\mathrm{j} \frac{2\pi d}{\lambda}(N-1)\sin \vartheta^r}\right]^\textsf{T}, \nonumber \\
		\mathbf{a}_\mathrm{UE}(\vartheta^t) &= \frac{1}{\sqrt{\bar{N}}} \left[1, e^{\mathrm{j} \frac{2\pi d}{\lambda}\sin \vartheta^t},\cdots, e^{\mathrm{j} \frac{2\pi d}{\lambda}(\bar{N}-1)\sin \vartheta^t}\right]^\textsf{T},
	\end{align}
	where $d$ and $\lambda$ are the antenna spacing and the signal wavelength. Moreover, for a UPA of size $M = M_y \times M_z$, the  array response of the RIS is given by
	\begin{align}
		\mathbf{a}_\mathrm{RIS}(\theta, &\phi) = \frac{1}{\sqrt{M}} \left[1, \cdots, e^{\mathrm{j}\frac{2\pi d}{\lambda} (m_1-1)\cos \phi \sin \theta + (m_2-1)\sin \phi  }, \right. \nonumber \\
		&\left. \cdots,  e^{\mathrm{j}\frac{2\pi d}{\lambda} (M_y-1)\cos \phi \sin \theta + (M_z-1)\sin \phi  } \right]^\textsf{T},
	\end{align}
	where $m_1 = 1,\cdots, M_y$ and $m_2 = 1,\cdots, M_z$ denote the RIS element index in y- and z-axis, respectively.

	\subsubsection{Channel Acquisition}
	In order to estimate the channels, the transmit pilot signals are used for $t = 1,\cdots, T$ time-slots. Then, the collected received signals at the BS are processed for estimation of the path directions, which are then used to construct the analog beamformers for SPIM. 
	Define $\widetilde{\mathbf{s}}_t$ as  the $\bar{N}\times 1$ uplink transmit pilot signal vector in the $t$-th time-slot, i.e., $\widetilde{\mathbf{s}}_t = \widetilde{\mathbf{A}} \widetilde{\mathbf{B}} \mathbf{s}_t\in \mathbb{C}^{\bar{N}}$, where $ \widetilde{\mathbf{A}}\in \mathbb{C}^{\bar{N}\times N_\mathrm{RF}}$ and $\widetilde{\mathbf{B}}\in \mathbb{C}^{N_\mathrm{RF}\times N_\mathrm{S}}$ denote the analog and baseband beamformers during channel training. Then, the $N\times 1$ received pilot signal at the BS for the $t$-th time-slot is given by
	\begin{align}
		\mathbf{r}_t &= {\mathbf{G}}_\mathrm{BR} \boldsymbol{\Psi}_t {\mathbf{G}}_\mathrm{RU} \widetilde{\mathbf{s}}_t + \mathbf{n}_t, 
		\label{receivedPilot}
	\end{align}
	In order to estimate the path directions at the BS, we exploit the sparsity of the mmWave transmission and represent the uplink channels by using the overcomplete dictionary matrices. Define $\mathbf{D}\in \mathbb{C}^{N\times P_\mathrm{BS}}$, where  $P_\mathrm{BS}$ denotes the number of elements in the dictionary. In particular, we have  $\mathbf{D} = \left[\mathbf{a}_\mathrm{BS}(\vartheta_1^r), \cdots, \mathbf{a}_\mathrm{BS}(\vartheta_{P_\mathrm{BS}}^r)  \right]$, where  $\mathbf{a}_\mathrm{BS}(\vartheta_{p}^r)$ is the $N\times 1$ steering vector corresponding to the direction $\vartheta_{p}^r$ in the dictionary grid for $p = 1,\cdots, P_\mathrm{BS}$. The uplink channel $\ {\mathbf{G}}_\mathrm{BR}$ is then represented in compact form as
	\begin{align}
		\ {\mathbf{G}}_\mathrm{BR} = \mathbf{D}\boldsymbol{\Lambda}_\mathrm{BR} \mathbf{A}_{\mathrm{RIS}}^\textsf{H},
		\label{channelsCompact}
	\end{align}
	where  $\boldsymbol{\Lambda}_\mathrm{BR}$ is a $P_\mathrm{BS}\times L$ $L$-sparse matrix and whose non-zero entries correspond to the path gains  $\alpha_p$. Also, $\mathbf{A}_{\mathrm{RIS}}\in \mathbb{C}^{M\times L}$ is defined as $\mathbf{A}_{\mathrm{RIS}} = \left[\mathbf{a}_\mathrm{RIS}(\theta_1^t,\phi_1^t)\cdots,\mathbf{a}_\mathrm{RIS}(\theta_L^t,\phi_L^t)\right]$. Next, we rewrite the received pilot signal model given in  (\ref{receivedPilot}) in terms of the overcomplete dictionary $\mathbf{D}$ as
	\begin{align}
		\mathbf{r}_t &=     \mathbf{D}\boldsymbol{\Lambda}_\mathrm{BR}\mathbf{A}_{\mathrm{RIS}} ^\textsf{H}\boldsymbol{\Psi}_t {\mathbf{G}}_\mathrm{RU}\widetilde{\mathbf{s}}_t + \mathbf{n}_t.
	\end{align}
	Let us combine  the transmit pilot signals for $t = 1,\cdots, T$  as $\widetilde{\mathbf{S}} = \left[\widetilde{\mathbf{s}}_1,\cdots, \widetilde{\mathbf{s}}_{{T}}\right]\in \mathbb{C}^{\bar{N}\times {T}}$. Then, the collected pilot signals at the BS are given by  $\mathbf{R} = \left[\mathbf{r}_1, \cdots, \mathbf{r}_T\right]\in \mathbb{C}^{N\times T}$, i.e.,
	\begin{align}
		\mathbf{R} &=     \mathbf{D}\boldsymbol{\Lambda}_\mathrm{BR}\mathbf{A}_{\mathrm{RIS}} ^\textsf{H} \widetilde{\mathbf{X}}  + \mathbf{N} \nonumber \\
		& = \mathbf{D}\mathbf{X} + \mathbf{N}, 
		\label{collectedPilots_SU}
	\end{align}
	where $\mathbf{N} = \left[\mathbf{n}_1, \cdots, \mathbf{n}_T\right]\in \mathbb{C}^{N\times T}$, $ \widetilde{\mathbf{X}} = \left[\boldsymbol{\Psi}_1 {\mathbf{G}}_\mathrm{RU}\widetilde{\mathbf{s}}_1,\cdots, \boldsymbol{\Psi}_T {\mathbf{G}}_\mathrm{RU}\widetilde{\mathbf{s}}_T \right]\in \mathbb{C}^{M\times T}$  and  $  {\mathbf{X}} = \boldsymbol{\Lambda}_\mathrm{BR}  \mathbf{A}_{\mathrm{RIS}} ^\textsf{H}\widetilde{\mathbf{X}} \in \mathbb{C}^{P_\mathrm{BS}  \times T }  $. Here, $\mathbf{X}$ is a row-sparse matrix, i.e., only the rows corresponding to the $L$ path directions $\{\vartheta_\ell^r \}_{\ell=1}^L$ are non-zero. Then, $\mathbf{X}$ can be found by solving the following joint-block-sparse recovery problem, i.e.,
	\begin{align}
		&\minimize_{\mathbf{X} } \ \| 	{\mathbf{R}} -\mathbf{D}_\mathrm{BS}\mathbf{X}  \|_\mathcal{F}^2 \nonumber\\
		& \subjectto  \ \| \mathbf{X} \|_{2,0} = L,
		\label{sparseRecovery}
	\end{align}
	where $\| \mathbf{X} \|_{2,0} = \sum_{p = 1}^{P_{\mathrm{BS}}} \mathcal{I}(\|[\mathbf{X}]_{p:}\| \neq 0 ) $ denotes the  $l_{2,0}$-norm  and $\mathcal{I}(x)$ is an indicator function defined as $\mathcal{I}(x) = \left\{\begin{array}{cc}
		1, & x \neq 0 \\
		0, & x = 0
	\end{array}   \right.$~\cite{joint_block_sparsity_Elbir2019Oct}. The problem in (\ref{sparseRecovery}) can be effectively solved via greedy-based techniques, e.g., OMP algorithm. In Algorithm~\ref{alg:OMP}, we present a sparse recovery algorithm by initializing the algorithm with  ${\mathbf{D}} $ and $ {\mathbf{R}}$, which returns the index terms corresponding to the path directions $\hat{\vartheta}_\ell^r$ in $\mathcal{I}$ and the corresponding columns in the dictionary ${\mathbf{D}}$. {The complexity order of the OMP technique in Algorithm \ref{alg:OMP} is mainly due to the matrix multiplications in the Steps 3 ($O\left(2P_\mathrm{BS}NT + P_\mathrm{BS}T^2\right)$) and 5 ($O\left(L^3 + L^2N + 2LNT\right)$), respectively. Thus, the overall complexity order is $O(P_\mathrm{BS}T^2 + 2NT (P_\mathrm{BS}+ L) + L^2(N+1))$.   }

	 Once the sparse recovery problem is solved, the set of steering vectors for SPIM can be constructed as  
	\begin{align}
		{\mathbf{A}}_\mathrm{C} = \left[  \mathbf{a}_\mathrm{BS}(\hat{\vartheta}_1^r), \cdots,  \mathbf{a}_\mathrm{BS}(\hat{\vartheta}_L^r) \right]\in \mathbb{C}^{N\times L}.
		\label{Ac}
	\end{align}
	Note that the sparse recovery algorithm in Algorithm~\ref{alg:OMP} is generic and it can also be used for estimating $\mathbf{A}_\mathrm{UE}$ and $\mathbf{A}_\mathrm{RIS}$ provided that the observation model in (\ref{collectedPilots_SU}) and the optimization problem in (\ref{sparseRecovery}) are constructed accordingly~\cite{irs_CE_Techniques_Wang2020Jun,irs_CE_fast_Zheng2020Nov,irs_CE_and_Loc_Lin2021Sep,irs_channelEst_OMP_Lin2022Apr}.

						\subsubsection{Hybrid Beamformer Design for SPIM}
						
						Once the steering vectors corresponding to the path directions $\{\hat{\vartheta}_\ell^r\}_{\ell=1}^L$ are found, and the set of analog beamformers for SPIM is constructed as in (\ref{Ac}), from which we construct the hybrid beamformer $	\mathbf{F}_i$ as
						\begin{align}
							\mathbf{F}_i = \mathbf{A}_i \mathbf{B}_i = \mathbf{A}_{\mathrm{C}}\mathbf{E}_i  \mathbf{B}_i,
							\label{Fi_Ai_Bi}
						\end{align}
						where  $\mathbf{E}_i$ is an $L\times L_\mathrm{S}$ selection matrix selecting the $L_\mathrm{S}$ out of $L$ steering vectors $\{\mathbf{a}_{\mathrm{BS}}(\hat{\vartheta}_\ell^t)\}_{\ell=1}^L$ for the $i$-th spatial pattern, and $\mathbf{E}_i = \left[\mathbf{e}_{i_1},\cdots, \mathbf{e}_{i_{L_\mathrm{S}}}\right]$,	 where $\mathbf{e}_{i_\ell}$ is the $i_{\ell}$-th column of the identity matrix $\mathbf{I}_{L_\mathrm{S}}$. Then, for the $i$-th spatial pattern, the baseband beamformer $\mathbf{B}_i$ is computed as
						\begin{align}
							\mathbf{B}_i = \mathbf{A}_i^\dagger \mathbf{F}, \label{designB}
						\end{align}
						which is then normalized as $\mathbf{B}_i = \frac{\sqrt{N_\mathrm{S}} {\mathbf{A}_i^\dagger \mathbf{F}  } }{\|\mathbf{A}_i\mathbf{B}_i   \|_\mathcal{F}}$.

						\begin{algorithm}[t]
							\begin{algorithmic}[1] 
								\caption{ \bf Sparse Recovery Algorithm}
								\color{black}
								\Statex {\textbf{Input:} Observation $\mathbf{R}$, dictionary  $\mathbf{D}$, sparsity level $L$. \label{alg:OMP}}
								\State Initialize $\overline{\mathbf{A}}$ as an empty matrix. $\mathcal{I} = \emptyset$. $\mathbf{R}_\mathrm{res} = \mathbf{R}$.
								\State \textbf{while} $k\leq L$ \textbf{do}
								\State \indent $k = \argmax_{\ell} \ \left[ \left(\mathbf{D}^\textsf{H}\mathbf{R}_\mathrm{res} \right) \left(\mathbf{D}^\textsf{H}\mathbf{R}_\mathrm{res} \right) ^\textsf{H}  \right]_{\ell \ell}$.
								\State \indent $\overline{\mathbf{A}} = \left[ \overline{\mathbf{A}}  \  [\mathbf{D}]_k  \right]  $,  $\mathcal{I}  = \mathcal{I}  \cup k $ .
								\State \indent $\mathbf{R}_\mathrm{res} = \mathbf{R} - \overline{\mathbf{A}} \left( \left(\overline{\mathbf{A}}^\textsf{H}\overline{\mathbf{A}}  \right)^\dagger \overline{\mathbf{A}}^\textsf{H} \mathbf{R}_\mathrm{res}  \right) $.
								\State \textbf{end while}
								\Statex \textbf{Output:} $\overline{\mathbf{A}}$ and $\mathcal{I}$.
							\end{algorithmic} 
						\end{algorithm}

						\subsection{RIS Design}
						In order to solve for the RIS phases, we first utilize the FD beamformer $\mathbf{F}$. Then, we construct the equivalent optimization problem maximizing the SE with respect to $\mathbf{F}$ as in (\ref{se_fd}),		for which the optimization problem for $\boldsymbol{\Psi}$ is given by
						\begin{align}
							&\maximize_{\boldsymbol{\Psi}} \; I_\mathrm{FD} \nonumber \\
							&\subjectto \; |[\boldsymbol{\psi}]_m|=1 , \; \forall m ,
							\label{opt_RIS1}
						\end{align}
						which is non-convex due to the constant-modulus constraint on RIS phases. To make the problem more tractable, an upperbound can be written as 
						\begin{align}
							&\log_2\left( \left| \mathbf{I}_{N_\mathrm{S}} +  \frac{1}{\sigma_n^2 N_\mathrm{S}}  \boldsymbol{\Sigma}_1^2 \right|  \right) \overset{{(a)}}{\leq} N_\mathrm{S} \log_2\left( 1 +  \frac{1}{\sigma_n^2 N_\mathrm{S}^2} \mathrm{Tr}\{\boldsymbol{\Sigma}_1^2 \}  \right) \nonumber \\
							& \hspace{67pt} \overset{{(b)}}{\leq} N_\mathrm{S} \log_2\left( 1 +  \frac{1}{\sigma_n^2 N_\mathrm{S}^2} \mathrm{Tr}\{\mathbf{H}\mathbf{H}^\textsf{H} \}  \right),
							\label{upperBound_RIS}
						\end{align}
						where $(a)$ is due to Jensen's inequality and $(b)$ holds since $\mathrm{Tr}\{\mathbf{H}\mathbf{H}^\textsf{H} \} = \sum_{n = 1}^{\mathrm{rank}\{\boldsymbol{\Sigma}\}} \lambda_n^2 \leq \sum_{n = 1}^{N_\mathrm{S}} \lambda_n^2 = \mathrm{Tr}\{\boldsymbol{\Sigma}_1^2 \} $, for which the equality is achieved when $\mathrm{rank}\{\boldsymbol{\Sigma}\} = N_\mathrm{S}$ and $\lambda_n$ denotes the $n$-th diagonal element of $\boldsymbol{\Sigma}$~\cite{ris_tutorial_Wu2021Jan,ris_Design_SU_Zhu}. By utilizing the upperbound in (\ref{upperBound_RIS}) and the cascaded channel in (\ref{channelCascaded}), we get
						\begin{align}
							&\maximize_{\boldsymbol{\psi}} \; \boldsymbol{\psi}^\textsf{H}\mathbf{Q}\boldsymbol{\psi}   \nonumber \\
							&\subjectto \; |[\boldsymbol{\psi}]_m|=1 , \; \forall m ,
							\label{opt_RIS2}
						\end{align}
						where $\mathbf{Q}     $ is an $M\times M$ matrix defined as $\mathbf{Q} = \left(\sum_{n = 1}^{\bar{N}} \mathrm{diag}\{[\mathbf{H}_\mathrm{RU} ]_{n:} \}\mathbf{H}_\mathrm{BR} \mathbf{H}_\mathrm{BR}^\textsf{H} \mathrm{diag}\{[\mathbf{H}_\mathrm{RU} ]_{n:}^\textsf{H} \}\right)^\textsf{T} $, where $[\mathbf{H}_\mathrm{RU} ]_{n:}$ denotes the $n$-th row of $\mathbf{H}_\mathrm{RU} $. 	To effectively solve (\ref{opt_RIS2}), the RIS phase elements can be estimated one by one while the remaining ones are fixed. Define $f_m(\boldsymbol{\psi})$ as the cost function of (\ref{opt_RIS2}) for the $m$-th element of $\boldsymbol{\psi}$, which can be written as 
						\begin{align}
							f_m(\boldsymbol{\psi}) = 	 \sum_{m_1 \neq m }^{M}&\sum_{m_2 \neq m}^{M} [\mathbf{Q}]_{m_1m_2} e^{\mathrm{j}(\psi_{m_1} - \psi_{m_2})} + [\mathbf{Q}]_{mm} \nonumber \\
							& + 		2 \Re\{e^{\mathrm{j}\psi_m}  \sum_{j\neq m}^{M} [\mathbf{Q}]_{mj}e^{-\mathrm{j}\psi_{j}} \} ,
							\label{opt_RIS3}
						\end{align}
						where the first two terms are constant for the fixed $m$-th RIS element. Thus, we can write the problem for the $m$-th element as
						$\maximize_{\psi_m} \Re\{e^{\mathrm{j}\psi_m}  \sum_{j\neq m}^{M} [\mathbf{Q}]_{mj}e^{-\mathrm{j}\psi_{j}} \}  $, which is equivalent to minimizing the phase term of $e^{\mathrm{j}\psi_m} \sum_{j\neq m}^{M} [\mathbf{Q}]_{mj}e^{-\mathrm{j}\psi_{j}}\triangleq \varrho_m e^{-\mathrm{j}(\psi_m - \zeta_m)}$. Then, $\psi_m$ can be computed as  $			\minimize_{\psi_m} \; |\psi_m - \zeta_m|$, for $m = 1,\cdots, M$~\cite{ris_Design_SU_Zhu}. 
						Finally, the discrete phase shifts can be obtained as $\hat{\psi}_m = \left\lfloor \frac{\zeta_m }{ {2\pi }/{2^\Delta } }  \right\rfloor {2\pi }/{2^\Delta }$ for $m = 1,\cdots, M$. {We note here that the selection of the discrete RIS phases is a practical approach as compared to continuous phase selection. As a result, the rate performance is subject to the resolution of the RIS phases.  }

								\subsubsection{Receiver Design}
						\label{sec:receiverDesign}
						At the receiver, the received signal $\mathbf{y}_i$ in (\ref{receivedSignal1}) is processed and the selected index bits are recovered. We assume that the channel path directions are available at the user from the channel acquisition process. Now, suppose that a single RF chain is employed at the user, and define $\mathcal{Z}_i = \{ \{\tilde{\mathbf{a}}_{i,1}, \cdots, \tilde{\mathbf{a}}_{i,L}   \}, \{\tilde{\mathbf{c}}_{i,1},\cdots,\tilde{\mathbf{c}}_{i,L} \} \}$ as the set of analog beamformer pairs for the $i$-spatial pattern, where $\tilde{\mathbf{a}}_{i,\ell}$ and $\tilde{\mathbf{c}}_{i,\ell}$ are the $N\times 1$ and $\bar{N}\times 1$ transmit and receive analog beamformer vectors for the $\ell$-th spatial path, respectively. Next, we define $\mathcal{Z}$ as the set of all beamformer sets associated with all spatial patterns, i.e., $\mathcal{Z} = \{\mathcal{Z}_1,\cdots, \mathcal{Z}_S \}$. The spatial pattern set $\mathcal{Z}$ is available at both transmitter and receiver as a look-out table, from which the recovery of the selected path indices are obtained. The beamformers for the selected $L_\mathrm{S}$ paths, e.g.,  $\{\tilde{\mathbf{a}}_{i,\ell}\}_{\ell=1}^{L_\mathrm{S}}$,  are employed at the transmitter in the $i$-th spatial pattern. Then, the received signal $\mathbf{y}_i$ is used at the receiver to estimate the selected spatial pattern. Then, we compute the strength of the outcome of each receive beamformer $\tilde{\mathbf{c}}_{i,\ell}$ as
						\begin{align}
							\nu_{i,\ell} = \frac{1}{\bar{N}}|\tilde{\mathbf{c}}_{i,\ell}^\textsf{H} \mathbf{y}_i  |, \label{receiver1}
						\end{align}
						which are computed for $\ell = 1,\cdots, L$, and written in a descending order as $\tilde{\nu}_{i,1} > \tilde{\nu}_{i,2} > \cdots > \tilde{\nu}_{i,L}$ without loss of generality, where $\tilde{\nu}_{i,\ell}$ denotes the $\ell$-th element of the ordered set obtained from $\{\nu_{i,\ell}\}_{\ell=1}^L$. Then, the set of recovered index bits are found from the indices of the first $L_\mathrm{S}$ elements of the set $\{\tilde{\nu}_{i,\ell}\}_{\ell=1}^L$. {The algorithmic steps of the overall SPIM scheme is presented in Algorithm~\ref{alg:SPIM}.}

						\begin{algorithm}[t]
							\begin{algorithmic}[1] 
								\caption{ \bf SPIM for RIS-aided mMIMO}
								\State \label{alg:SPIM}Use Algorithm \ref{alg:OMP} and obtain $\overline{\mathbf{A}}$ and $\{ \hat{\vartheta}_{\ell}^r\}_{\ell=1}^{L} $.
								\State Construct analog beamformer $\mathbf{A}_i$ from $\mathbf{A}_\mathrm{C}$ for the $i$-th spatial pattern. 
								\State Design baseband beamformer $\mathbf{B}_i$ as in (\ref{designB}) and construct the hybrid beamformer $\mathbf{F}_i$ as in (\ref{Fi_Ai_Bi}).
								\State Design the RIS phases via (\ref{opt_RIS3}). 
								\State At the user side, detect activated spatial paths and estimate the spatial pattern via \eqref{receiver1}. 
							\end{algorithmic} 
						\end{algorithm}

						\section{Extension to Multi-User Scenario}
						In this section, we extend the SPIM idea to improve the SE performance for multi-user systems. We consider $U$ users with single RF chains for simplicity. Thus, we have $L_\mathrm{S} = 1$ and $N_\mathrm{RF} = U$.  As a result, SPIM is performed by selecting a single path out of $L$ for each user. Also, without loss of generality, we assume that there exist $L$ paths between the RIS and each user. In this case,  the number of paths between the BS and the RIS becomes $\bar{L} = UL$~\cite{irs_CE_Chen2023Feb,irs_CE_MU_Lu2024May}. Note that this does not affect the recovery of the selected spatial path indices as each user perform SPIM recovery separately, which is discussed in \ref{sec:receiverDesign}.

						In the downlink, for a given RIS configuration $\boldsymbol{\Psi}$, the cascaded channel for the $u$-th user is defined similar to  (\ref{channelCascaded}) as $\mathbf{H}_u\in \mathbb{C}^{\bar{N}\times N}$, and the received signal at the $u$-th user becomes
						\begin{align}
							\mathbf{y}_{u,i} = \mathbf{H}_u \mathbf{A}_i \sum_{u=1}^{U}\mathbf{b}_{ui}s_u + \mathbf{n},
						\end{align}
						where  $\mathbf{b}_{ui}\in \mathbb{C}^{N_\mathrm{RF}}$ represents the baseband beamformer vector corresponding to the $u$-th user for the $i$-th spatial pattern. After applying the combiner vector $\mathbf{c}_{ui}\in \mathbb{C}^{\bar{N}}$ ($|[\mathbf{c}_{ui}]_n| = 1/\sqrt{\bar{N}}$, $n =1,\cdots,\bar{N}$), we get the baseband signal ${y}_{u,i} = \mathbf{c}_{ui}^\textsf{H}		\mathbf{y}_{u,i}$ as
						\begin{align}
							{y}_{u,i}  = \mathbf{c}_{ui}^\textsf{H}\mathbf{H}_u \mathbf{A}_i \sum_{u=1}^{U}\mathbf{b}_{ui}s_u + \mathbf{c}_{ui}^\textsf{H}\mathbf{n}.
							\label{sigModel_MU}
						\end{align}
						
						In the following, we introduce the SE expression for multi-user scenario based on the the signal model in (\ref{sigModel_MU}). Then, we present the beamformer and RIS design algorithms for multi-user SPIM.

						\begin{lemma}
							\label{lemma3}
							A closed-form expression for the achievable SE of the multi-user system with SPIM is given by
							\begin{align}
								{I}_\mathrm{SPIM-MU}&\hspace{-2pt} = \hspace{-2pt} \sum_{u = 1}^{U} \hspace{-2pt}\bigg( \hspace{-2pt}\log_2 \frac{S}{2}\hspace{-2pt} -\hspace{-2pt} \frac{1}{S}  \sum_{i = 1}^{S}  \log_2 \hspace{-2pt} \left(\sum_{j = 1}^{S}\hspace{-2pt}\frac{1}{\gamma_{ui} + \gamma_{uj}}  \hspace{-2pt}   \right)\hspace{-4pt} \bigg). \label{SE_7}
							\end{align}
							where $		\gamma_{ui}	= 1 + \frac{ \frac{1}{U}|\mathbf{c}_{u,i}^\textsf{H}\mathbf{H}_u \mathbf{A}_i\mathbf{b}_{ui} s_u  |^2   }{ \frac{1}{U} \sum_{u' \neq u} |\mathbf{c}_{u,i}^\textsf{H}\mathbf{H}_u \mathbf{A}_i\mathbf{b}_{u'i} s_{u'}|^2 +   \sigma_n^2  }$.
						\end{lemma}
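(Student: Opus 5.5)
The plan is to reduce the multi-user problem to $U$ parallel single-antenna, single-stream SPIM links. Each such link is then handled with the single-user machinery that produced (\ref{SE_14}), with $\bar{N}=1$ and the $\bar N\times\bar N$ covariance $\mathbf{M}_i$ replaced by a scalar.

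First, fix user $u$ and spatial pattern $i$ in (\ref{sigModel_MU}). Following the single-user decomposition (\ref{SE_11}), split the user's mutual information into an APM part $I(y_{u,i};s_u|\mathbf{F}_i)$ and an index part $I(y_{u,i};\mathbf{F}_i)$. The interference $\sum_{u'\neq u}\mathbf{c}_{ui}^\textsf{H}\mathbf{H}_u\mathbf{A}_i\mathbf{b}_{u'i}s_{u'}$ is then treated as additional Gaussian noise, independent of $s_u$, which is the usual worst-case-noise argument.

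With this treatment, and with power $1/U$ allocated per user stream, the effective scalar received power for pattern $i$ is a sum of three terms: the desired power $\frac1U|\mathbf{c}_{u,i}^\textsf{H}\mathbf{H}_u\mathbf{A}_i\mathbf{b}_{ui}|^2$, the interference power, and $\sigma_n^2$. I define the interference-plus-noise level $\xi_{ui}=\frac1U\sum_{u'\neq u}|\cdot|^2+\sigma_n^2$. Normalizing by $\xi_{ui}$, the scalar analogue of $\mathbf{M}_i/\sigma_n^2$ is exactly $\gamma_{ui}=1+\mathrm{SINR}_{ui}$. The APM term therefore becomes $\frac1S\sum_i\log_2\gamma_{ui}$, as in (\ref{SE_12}).

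Next, apply the lower bound (\ref{SE_13}) with $\bar N=1$ and the normalized scalar covariances $\gamma_{ui}$. Then repeat the algebra of (\ref{SE_22})--(\ref{SE_14}). The $\frac1S\sum_i\log_2\gamma_{ui}$ coming from the APM term cancels the $|\mathbf{M}_i|$ numerator inside the log-sum of (\ref{SE_13}). The constant gap $1-\log_2 e$ is removed exactly as in the single-user derivation.

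What remains is $\log_2 S-\log_2 2-\frac1S\sum_i\log_2\sum_j(\gamma_{ui}+\gamma_{uj})^{-1}$. This matches (\ref{SE_14}) with $\bar N=1$, where the $\sigma_n^2$ factor has been absorbed into the normalization by $\xi_{ui}$. Summing over $u$ gives (\ref{SE_7}), since each user decodes its own path index separately (Section~\ref{sec:receiverDesign}). The sum rate is then the sum of the per-user SPIM rates.

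The main obstacle is the normalization step. The single-user formula carries $\sigma_n^2$ explicitly, while here the interference level $\xi_{ui}$ depends on the pattern $i$, so dividing $\mathbf{M}_i$ by a pattern-dependent constant does not preserve the structure of (\ref{SE_13}) exactly. I will handle this by taking $\xi_{ui}$ to be pattern-independent, or by bounding it by a common value. Equivalently, I will adopt the SINR-normalized Gaussian equivalent channel as the working model, and state explicitly that this yields an achievable (approximate) expression rather than the exact mutual information.
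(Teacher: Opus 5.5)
Your proposal is correct and is essentially the paper's argument: split into an APM term $\frac{1}{S}\sum_i\log_2\gamma_{ui}$ and an index term, use the scalar ($\bar N=1$) case of the Jensen lower bound (\ref{SE_13}), cancel $\frac{1}{S}\sum_i\log_2\gamma_{ui}$, remove the $1-\log_2 e$ bias, and sum over users. The only difference is that the paper re-derives the scalar bound directly from the mutual-information integral, using the Gaussian overlap $\int f(y_{ui}|i)f(y_{ui}|j)\,dy_{ui}=1/(\pi(\gamma_{ui}+\gamma_{uj}))$, instead of specializing (\ref{SE_13})--(\ref{SE_14}). On the normalization issue you raise, the paper simply posits $f(y_{ui}|i)=\mathcal{CN}(0,\gamma_{ui})$, which is exactly your SINR-normalized Gaussian working model. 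So commit to that option (or to genuinely pattern-independent $\xi_{ui}$), because bounding $\xi_{ui}$ by a common value changes the quantities inside the log-sum and would not reproduce (\ref{SE_7}) as stated.
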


						\begin{proof}
							The proof is provided in Appendix~\ref{proofT2}.
						\end{proof}
						
						The beamformer design problem for multi-user scenario can be cast similar to (\ref{opt1}) while maximizing  $I_\mathrm{SPIM-MU}$ in (\ref{SE_7}). Similar to the single-user scenario, we discuss the design of hybrid beamformer and the RIS phases in the following.

						\subsection{Channel Acquisition and Beamforming for Multi-user}
						In order to construct the beamformers, uplink channel acquisition is first performed. Then, the beamformers are constructed similar to the single-user case. In particular, the uplink channels $\widetilde{\mathbf{G}}_\mathrm{BR}\in \mathbb{C}^{N\times M}$ and $\widetilde{\mathbf{G}}_{\mathrm{RU}_u}\in \mathbb{C}^{M\times \bar{N}}$ are defined as
						\begin{align}
							\widetilde{\mathbf{G}}_\mathrm{BR}	& = \sqrt{\frac{N M}{\bar{L}}} \sum_{\ell = 1}^{{L}} \sum_{u = 1}^{U} \alpha_{\ell,u} \mathbf{a}_{\mathrm{BS}}(\vartheta_{\ell,u}^r)\mathbf{a}_{\mathrm{RIS}}^\textsf{H}(\theta_{\ell,u}^t, \phi_{\ell,u}^t) , \nonumber \\
							\widetilde{\mathbf{G}}_{\mathrm{RU}_u}	& = \sqrt{\frac{M\bar{N} }{L}} \sum_{\ell = 1}^{L} \beta_{\ell,u} \mathbf{a}_{\mathrm{RIS}}(\theta_{\ell,u}^r, \phi_{\ell,u}^r)\mathbf{a}_{\mathrm{UE}}^\textsf{H}(\vartheta_{\ell,u}^t) ,
							\label{channelModelsMU}
						\end{align}
						where $\alpha_{\ell,u}\in \mathbb{C}$ and  $\beta_{\ell,u}\in \mathbb{C}$ denote the channel gains, and $\vartheta_{\ell,u}^r,\theta_{\ell,u}^t, \phi_{\ell,u}^t,\theta_{\ell,u}^r, \phi_{\ell,u}^r $ and $\vartheta_{\ell,u}^t$ correspond to the directions of the $\ell$-th path and the $u$-th user.
						
						In order to design the hybrid beamformers at the BS, we need to estimate the path directions $\{\vartheta_{\ell,u}^r \}_{\ell=1,u=1}^{L,U}$, from which the set of analog beamformers for SPIM can be constructed from the corresponding steering vectors $\mathbf{a}_\mathrm{BS}(\vartheta_{\ell,u}^r)$. Thus, we consider the uplink channel acquisition stage, wherein each user transmits the pilot signals $\widetilde{\mathbf{s}}_{t,u}\in \mathbb{C}^{\bar{N}}$ at time-slot $t$. Then, the $N\times 1$ received signal at the BS from the $u$-th user becomes
						\begin{align}
							\widetilde{\mathbf{r}}_{t,u} = \widetilde{\mathbf{G}}_{\mathrm{BR}} \boldsymbol{\Psi}_t \widetilde{\mathbf{G}}_{\mathrm{RU},u}\widetilde{\mathbf{s}}_{t,u} + \mathbf{n}_{t,u}.
						\end{align}
						Then, the received signal collected from all users in the $t$-th time-slot is given by
						\begin{align}
							\widetilde{\mathbf{r}}_{t} & = \sum_{u = 1}^{U}\left(\widetilde{\mathbf{G}}_{\mathrm{BR}} \boldsymbol{\Psi}_t \widetilde{\mathbf{G}}_{\mathrm{RU},u}\widetilde{\mathbf{s}}_{t,u} + \mathbf{n}_{t,u}\right) \nonumber \\
							&= \widetilde{\mathbf{G}}_{\mathrm{BR}} \boldsymbol{\Psi}_t \left(  \sum_{u = 1}^{U} \widetilde{\mathbf{G}}_{\mathrm{RU},u}\widetilde{\mathbf{s}}_{t,u}  \right)  + \sum_{u = 1}^{U} \mathbf{n}_{t,u} \nonumber \\
							& = \widetilde{\mathbf{G}}_{\mathrm{BR}} \boldsymbol{\Psi}_t\overline{\mathbf{G}}_\mathrm{RU} \overline{\mathbf{s}}_t + \widetilde{\mathbf{n}}_t,
						\end{align}
						where $\overline{\mathbf{G}}_\mathrm{RU} = \left[\widetilde{\mathbf{G}}_{\mathrm{RU},1},\cdots, \widetilde{\mathbf{G}}_{\mathrm{RU},U}  \right]\in \mathbb{C}^{M\times \bar{N}U}$, $\overline{\mathbf{s}}_t = \left[\widetilde{\mathbf{s}}_{t,1}^\textsf{T},\cdots, \widetilde{\mathbf{s}}_{t,U}^\textsf{T}   \right]^\textsf{T}\in \mathbb{C}^{\bar{N}U }$ and $\overline{\mathbf{n}}_t =\sum_{u = 1}^{U} \mathbf{n}_{t,u}  \in \mathbb{C}^{N\times 1} $. Furthermore, we can write the received pilot signals for $t= 1, \cdots, T$ time-slots in a compact form as
						\begin{align}
							\overline{\mathbf{R}} = \widetilde{\mathbf{G}}_{\mathrm{BR}} \overline{\mathbf{X}} + \overline{\mathbf{N}},
							\label{pilotsMU}
						\end{align}
						where $\overline{\mathbf{R}} = \left[\overline{\mathbf{r}}_1,\cdots, \overline{\mathbf{r}}_T   \right]\in \mathbb{C}^{N\times T}$, $ \overline{\mathbf{X}}  = \left[\boldsymbol{\Psi}_1 \overline{\mathbf{G}}_\mathrm{RU}\overline{\mathbf{s}}_1, \cdots,  \boldsymbol{\Psi}_T \overline{\mathbf{G}}_\mathrm{RU}\overline{\mathbf{s}}_T\right]\in \mathbb{C}^{M\times T}$  and $\overline{\mathbf{N}} = \left[\overline{\mathbf{n}}_1,\cdots, \overline{\mathbf{n}}_T   \right]\in \mathbb{C}^{N\times T}$.  The signal model in (\ref{pilotsMU}) is similar to the collected pilot signals for the single-user case in (\ref{collectedPilots_SU}), for which the channels $\widetilde{\mathbf{G}}_{\mathrm{BR}}$, $ \widetilde{\mathbf{G}}_{\mathrm{RU},u}$ and the path directions can be estimated via CS-based techniques~\cite{irs_CE_Techniques_Wang2020Jun,irs_channelEst_OMP_Lin2022Apr,irs_CE_practical_MU_Di2020Feb} or tensor-based techniques~\cite{irs_MU_MIMO_CE_deAraujo2021Feb}. {Specifically, one can employ an OMP-based approach presented in~\cite{elbir_low_res_hB_10742011} which takes into account the inter-user interference in multi-user scenario.         }

						 Define $\hat{\vartheta}_{\ell,u}^r$ as the estimated path directions for $\ell = 1,\cdots, L$ and $u = 1,\cdots, U$. Then,  the set of analog beamformer at the BS for SPIM is given by
						\begin{align}
							\widetilde{\mathbf{A}}_\mathrm{C} = \left[\mathbf{a}_\mathrm{BS}(\hat{\vartheta}_{1,1}^r), \cdots, \mathbf{a}_\mathrm{BS}(\hat{\vartheta}_{L,1}^r), \cdots, \mathbf{a}_\mathrm{BS}(\hat{\vartheta}_{L,U}^r)  \right],
						\end{align}
						from which we construct the hybrid beamformer as
						\begin{align}
							\widetilde{\mathbf{F}}_i = \widetilde{\mathbf{A}}_i \widetilde{\mathbf{B}}_i = \widetilde{\mathbf{A}}_\mathrm{C} \widetilde{\mathbf{E}}_i \widetilde{\mathbf{B}}_i,
						\end{align}
						where $\widetilde{\mathbf{E}}_i = \left[\mathbf{E}_{1,i},\cdots, \mathbf{E}_{U,i}   \right] $ is a $UL\times UL_\mathrm{S}$ selection matrix, where
						the $L\times L_\mathrm{S}$ matrix $\mathbf{E}_{u,i}$ performs the selection of $L_\mathrm{S}$ out of $L$ paths for the  $u$-th user, and it is defined similar to $\mathbf{E}_i$.
						
						{In multi-user scenario, the detection of the spatial pattern at the users is similar to the single-user case as discussed in Section~\ref{sec:receiverDesign}. Specifically, each user has the knowledge of spatial path directions which are obtained during channel estimation stage. Then, a single RF chain can be activated at the user for the detection of the activated spatial path directions. Define $\tilde{\mathbf{c}}_{u,i,\ell}\in \mathbb{C}^{\bar{N}}$ as the analog combiner corresponding to the $\ell$-th path in the $i$-th spatial pattern of the $u$-th user. Then, the strength of the outcome of each receive combiner is computed as $	\nu_{u,i,\ell} = \frac{1}{\bar{N}} |\tilde{\mathbf{c}}_{u,i,\ell}\mathbf{y}_{u,i}  |  $ for $\ell = 1,\cdots, L$. By utilizing the ordered set  $ \tilde{\nu}_{u,i,1} > \tilde{\nu}_{u,i,2} > \cdots > \tilde{\nu}_{u,i,L}  $, where $\tilde{\nu}_{u,i,\ell}$  is the $\ell$-th element in the set $\{\nu_{u,i,\ell} \}_{\ell=1}^L$ in descending order, we get the recovered index bits at the user $u$ from the indices of the first $L_\mathrm{S}$ elements.  
				    }

						\subsection{RIS Design for Multi-user}
						In multi-user scenario, we optimize the RIS phases by taking into account that the SINR constraint as $\mathrm{SINR}_{ui}\geq \kappa_u$, where $\kappa_u$ is defined as the minimum SINR threshold. Then, the RIS phase design problem is given by
						\begin{align}
							\mathrm{Find} \ & \  \boldsymbol{\psi} \nonumber \\
							\subjectto & \ \mathrm{SINR}_{ui} \geq \kappa_u \nonumber \\
							&\subjectto \; |[\boldsymbol{\psi}]_m|=1 , \; \forall m ,
							\label{opt_RIS_MU}
						\end{align}
						Define $\mathbf{z}_{u_1,u_2,i} = \mathrm{diag}\{\mathbf{c}_{u_1,i}^{H}\mathbf{H}_{\mathrm{RU},u_2} \}\mathbf{H}_\mathrm{BR}\mathbf{A}_i\mathbf{b}_{u_2,i} \in \mathbb{C}^{M} $, for which we have $\mathbf{c}_{u,i}^\textsf{H}\mathbf{H}_u \mathbf{A}_i\mathbf{b}_{ui} = \boldsymbol{\psi}^\textsf{H} \mathbf{z}_{u_1,u_2,i}  $. Then,  $\mathrm{SINR}_{ui}$ is rewritten as $		\mathrm{SINR}_{u,i} = \frac{\frac{1}{U} |\boldsymbol{\psi}^\textsf{H}\mathbf{z}_{u,u,i}   |^2         }{ \frac{1}{U}   \sum_{u' \neq u} |\boldsymbol{\psi}^\textsf{H}\mathbf{z}_{u,u',i}|^2 + \sigma_n^2             }$. Then, we can write the following equivalent problem to (\ref{opt_RIS_MU}) as
						\begin{align}
							\mathrm{Find} \ & \  \boldsymbol{\psi} \nonumber \\
							\subjectto & {\boldsymbol{\psi}}^\textsf{H} \mathbf{Z}_{u,u,i}{\boldsymbol{\psi}} \geq  \kappa_u \sum_{u' \neq u} \left( {\boldsymbol{\psi}}^\textsf{H} \mathbf{Z}_{u,u',i}{\boldsymbol{\psi}}  + \sigma_n^2\right)   \nonumber \\
							& \; |[\boldsymbol{\psi}]_m|=1 , \; \forall m ,
							\label{opt_RIS_MU2}
						\end{align}
						where $\mathbf{Z}_{u_1,u_2,i} = 
						\mathbf{z}_{u_1,u_2,i} \mathbf{z}_{u_1,u_2,i}^\textsf{H}    \in \mathbb{C}^{M\times M}  $~\cite{irs_TWC}. Next, we aim to rewrite (\ref{opt_RIS_MU2}) in a more tractable form by leveraging semi-definite relaxation (SDR) techniques. Therefore, we rewrite the inequality constraint in (\ref{opt_RIS_MU2}) by using the equality that $\mathrm{Tr}\{{\boldsymbol{\psi}}^\textsf{H} \mathbf{Z}_{u,u,i}{\boldsymbol{\psi}}\} = \mathrm{Tr}\{\mathbf{Z}_{u,u,i}{\boldsymbol{\psi}} {\boldsymbol{\psi}}^\textsf{H} \}$. Thus, we define $\boldsymbol{\Theta} = \boldsymbol{\psi} \boldsymbol{\psi}^\textsf{H} \in \mathbb{C}^{M\times M}$ satisfying $\boldsymbol{\Theta}\succeq
						\mathbf{0} $ and $\mathrm{rank}\{\boldsymbol{\Theta} \} = 1$. Then, the problem in (\ref{opt_RIS_MU2}) can be recast as
						\begin{align}
							\mathrm{Find} \ & \  \boldsymbol{\Theta} \nonumber \\
							\subjectto &\mathrm{Tr}\{\mathbf{Z}_{u,u,i}\boldsymbol{\Theta} \} \geq  \kappa_u \sum_{u' \neq u} \left(\mathrm{Tr}\{\mathbf{Z}_{u,u',i}\boldsymbol{\Theta} \}  + \sigma_n^2\right)   \nonumber \\
							& \ [\boldsymbol{\Theta}]_{mm}   = 1 , \; \forall m,
							\ \boldsymbol{\Theta}\succeq
							\mathbf{0},
							\mathrm{rank}\{\boldsymbol{\Theta}\} = 1,
							\label{opt_RIS_MU3}
						\end{align}
						which is still non-convex due to the rank constraint, which can be removed, and the problem in (\ref{opt_RIS_MU3}) can be effectively solved via Gaussian randomization techniques and a feasible solution can be found~\cite{irs_TWC,sdr_Luo2010Apr}.

						\section{Numerical Simulations}
						In this section, we present the performance of the proposed SPIM approach via various numerical experiments in terms of SE over $500$ Monte Carlo trials. The number of antennas at the BS and the user are $N=128$ and $\bar{N}=16$, respectively. The channel path directions are drawn from $[-90^\circ,90^\circ]$ uniformly at random, while the path gains are drawn from $ \mathcal{N}(1, (0.2)^2)$.

						\begin{figure}[t]
							\centering
							{\includegraphics[draft=false,width=.99\columnwidth]{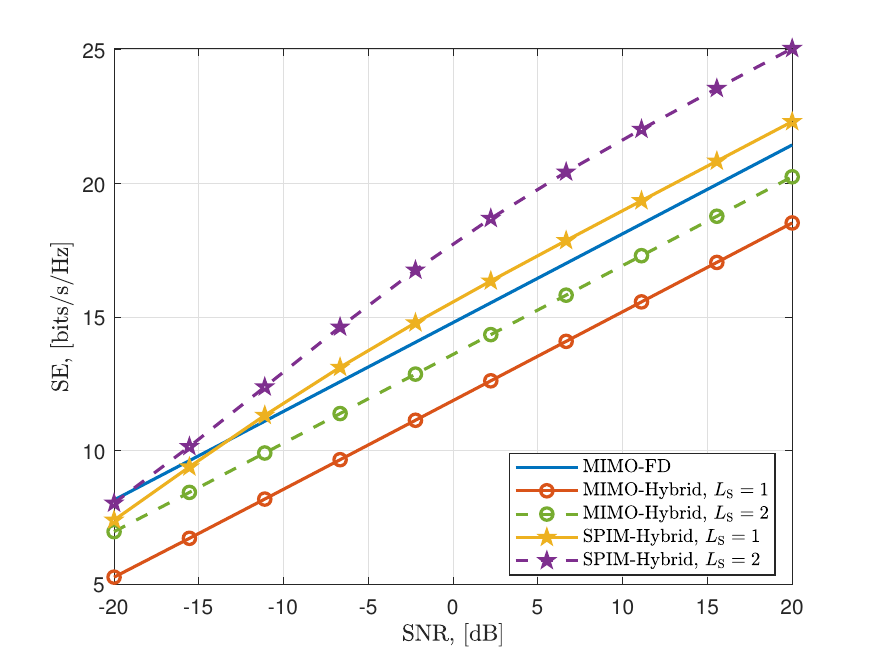} }
							\caption{SE versus $\mathrm{SNR}$  when  $L=8$  for $L_\mathrm{S} = \{1,2\}$.
							}
							\label{fig_SE_SNR}
						\end{figure}

						\begin{figure}[t]
							\centering
							{\includegraphics[draft=false,width=.99\columnwidth]{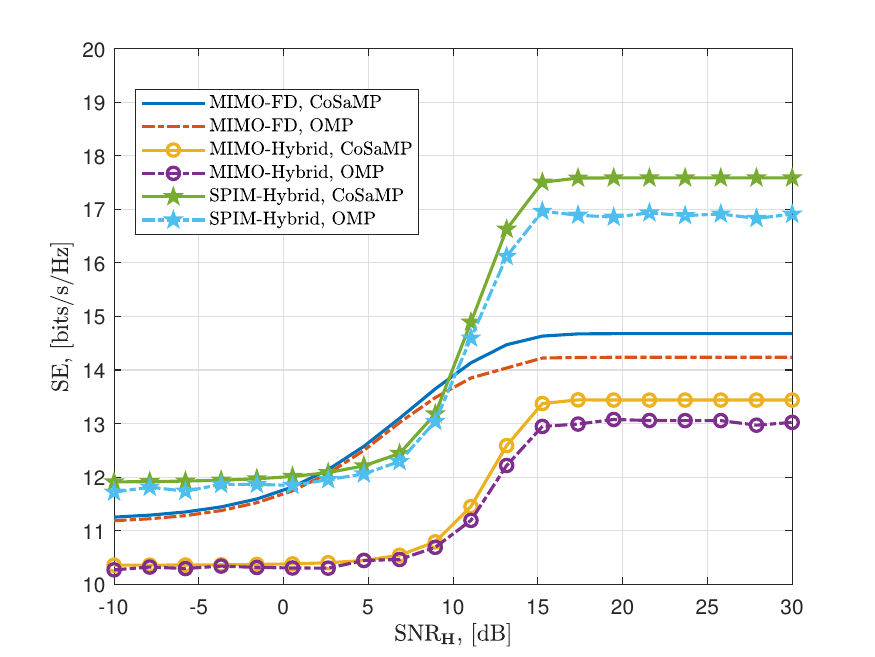} }
							\caption{SE versus $\mathrm{SNR}_\mathbf{H}$  when  $L=8$  for $L_\mathrm{S} = 1$.
							}
							\label{fig_SE_SNR_H}
						\end{figure}

						\begin{figure}[t]
							\centering
							{\includegraphics[draft=false,width=.99\columnwidth]{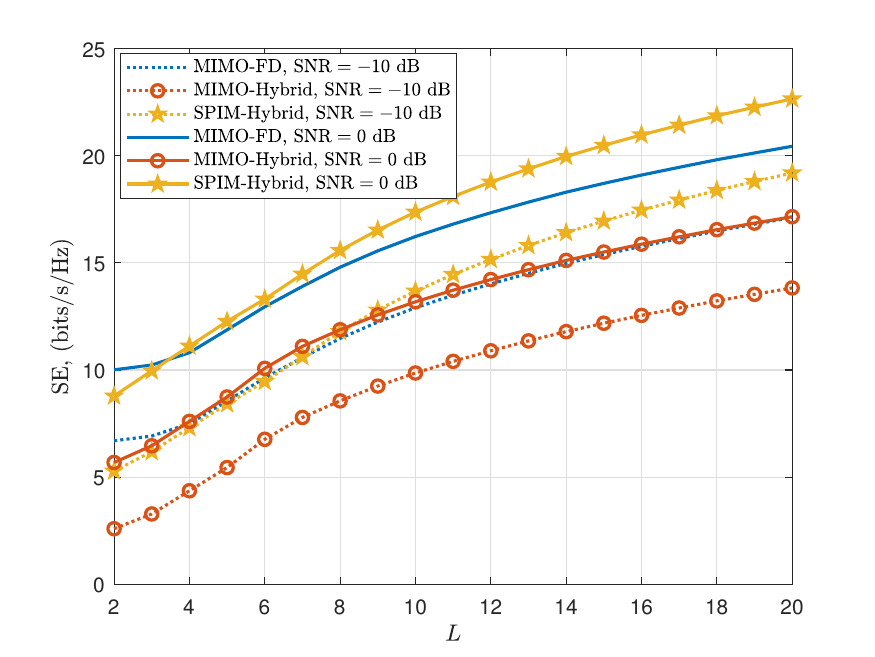} }
							\caption{SE versus number of paths $L$ for $\mathrm{SNR}=\{-10,0\}$ dB and $L_\mathrm{S}= N_\mathrm{S} = 1$.
							}
							\label{fig_SE_L}
						\end{figure}
						
						Fig.~\ref{fig_SE_SNR} shows the SE performance of our SPIM approach in comparison with conventional hybrid and FD beamformers when $L=8$ and $L_\mathrm{S}\in \{1,2\}$. Conventional hybrid beamforming achieves less SE due to the use of limited RF chains as the analog beamformers are designed corresponding to the strongest 1 and 2 channel paths $L_\mathrm{S}=1$ and $L_\mathrm{S}=2$, respectively. While our SPIM approach employs the same number of RF chains, it exploits the spatial diversity of $L=8$ paths, and delivers additional index bits corresponding to multiple spatial patterns. This allows a significant increase in the SE, even higher than that of FD beamformer. In particular, $20\%$ SE improvement is obtained via SPIM than that of FD beamforming when $\mathrm{SNR}=0$.

						{Fig.~\ref{fig_SE_SNR_H} shows the SE performance of the proposed SPIM scheme under the impact of imperfect channels. Here, the imperfect channel is obtained by introducing AWGN onto the perfect channel matrix, and define $\mathrm{SNR}_\mathbf{H}$ as the SNR on the channel matrices given in (\ref{channelCascaded}). We also present the performance of different sparse recovery algorithms, e.g., OMP and CoSaMP~\cite{omp_cosamp_CE_9238899}.  We can see from Fig~\ref{fig_SE_SNR_H} that the performance of all beamforming schemes improves as $\mathrm{SNR}_\mathbf{H}$ increases while the proposed SPIM approach outperforms the FD beamformer for $\mathrm{SNR}_\mathbf{H}>10$ dB. We also observe that CoSaMP-based beamforming performs better than the OMP algorithm~\cite{omp_cosamp_CE_9238899,cosamp_needell2009cosamp}. The superior performance of CoSaMP is thanks to its support selection strategy. Specifically, in OMP, which can select only a single support vector in each iteration, the selected supports are not removed from the dictionary matrix whereas CoSaMP overcomes this defect by selecting more support vectors and removes them for the next iteration. 
						We note that the support recovery algorithm presented in Algorithm~\ref{alg:OMP} is based on OMP, and it provides a generic solution for the support recovery problem. Thus, any sparse support recovery algorithm can be employed in our proposed SPIM approach.  }

						Fig.~\ref{fig_SE_L} illustrates the performance of our SPIM approach with respect to $L$ when $\mathrm{SNR}=0$ dB and $L_\mathrm{S}= N_\mathrm{S} = 1$. We observe that the SPIM-based hybrid beamforming achieves higher SE than the FD-based beamforming for $L>3$. This means that $S=3$ different spatial pattern is enough to achieve higher than FD beamforming performance while using only a single RF chain. 
						
						We also present the comparison of the algorithms if different number of spatial paths are selected for SPIM. In Fig.~\ref{fig_SE_Ls}, we present the SE of the algorithms with respect to $L_\mathrm{S}$. We can see that the performance of conventional hybrid beamformer approaches to the FD beamformer as it employs a dedicated RF chain for each path, i.e., $N_\mathrm{RF}=L_\mathrm{S}$. Likewise, the SE of our SPIM approach also increases. However, it reduces and attains the SE of FD beamformer since the number of possible spatial patterns $S$ reduces as $L_\mathrm{S}\rightarrow L$. {Apart from the values of $L $ and $L_\mathrm{S}$, the channel gain of the spatial paths also affects the SE performance. Specifically, SPIM is favorable when the path gains are comparable~\cite{spim_bounds_JSTSP_Wang2019May,elbir_spim_radarConf_10149629}. In order to evaluate this scenario, we consider the scenario for $L=2$ and $L_\mathrm{S}=1$, and assume that the channel paths gains have the relationship as $\alpha_1+ \alpha_2 =1 $. Fig.~\ref{fig_SE_gamma} shows the SE performance with respect to $\alpha_1$ of SPIM and conventional MIMO hybrid beamforming. It can be seen that SPIM achieves higher SE than that of MIMO hybrid beamforming as $\alpha_1 <4 \alpha_2$~\cite{spim_bounds_JSTSP_Wang2019May}. When the spatial path gains are apart from each other, i.e., $\alpha_1 >4\alpha_2$, the conventional MIMO beamforming becomes favorable. This is expected as MIMO beamforming employs the strongest spatial paths.  }

						\begin{figure}[t]
							\centering
							{\includegraphics[draft=false,width=.99\columnwidth]{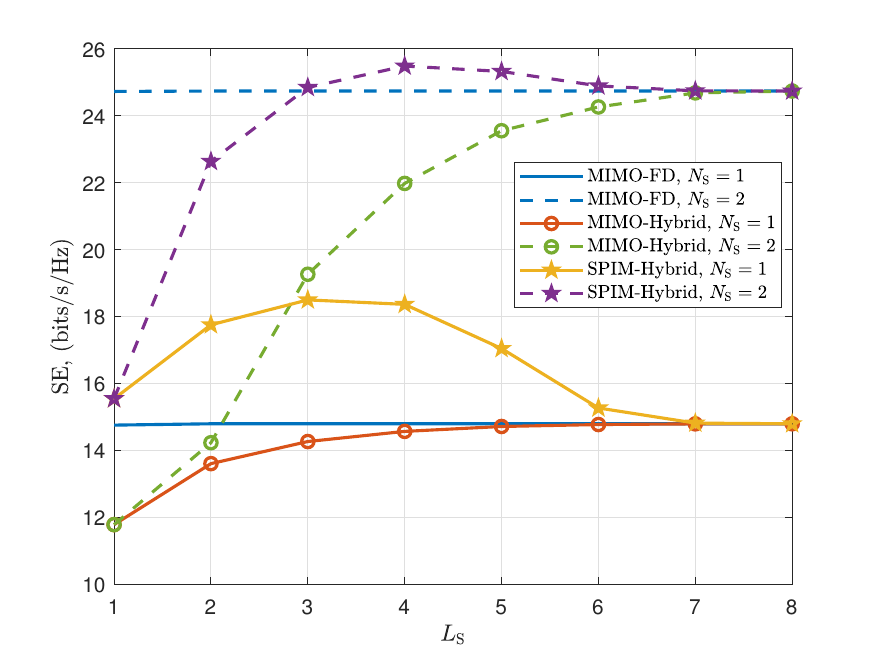} }
							\caption{SE versus $L_\mathrm{S}$ when $\mathrm{SNR}=0$ dB, $L =8$ and $ N_\mathrm{S} = \{1,2\}$.
							}
							\label{fig_SE_Ls}
						\end{figure}
						
						\begin{figure}[t]
							\centering
							{\includegraphics[draft=false,width=.99\columnwidth]{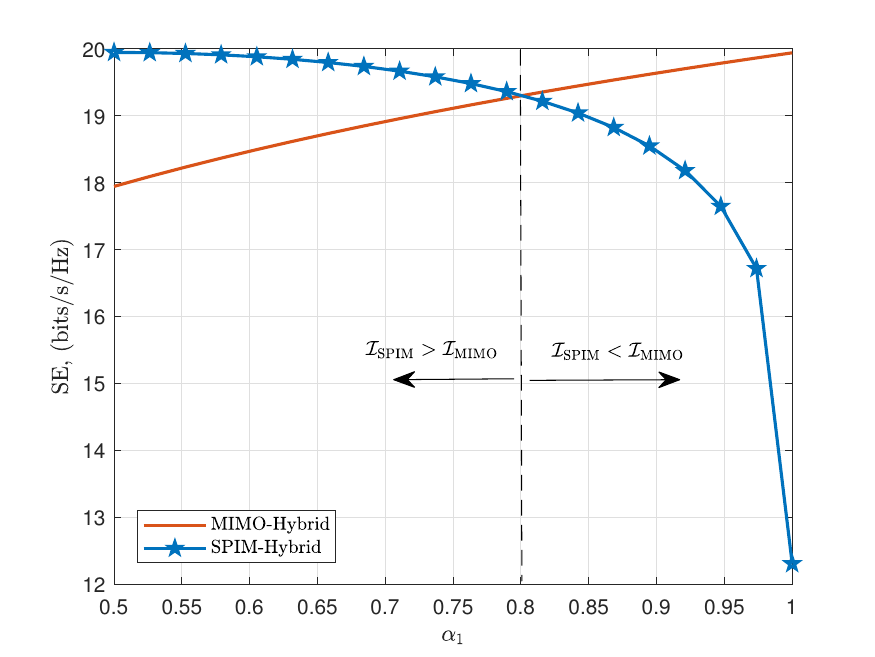} }
							\caption{SE versus spatial path gain $\alpha_1$ when $L=2$ and $L_\mathrm{S}=1$.
							}
							\label{fig_SE_gamma}
						\end{figure}

						In Fig. \ref{fig_SE_M}, the SE performance is presented with respect to the number of RIS elements $M$ for $L\in \{2,4,8\}$ when the number of selected paths $L_\mathrm{S}=1$. 
						%

						{Fig~\ref{fig_SE_bound_L} shows the performance gap between SPIM and FD beamforming techniques in comparison with the performance bound given in the right hand side of (\ref{thereom1eq}). We see that the lower bound is tighter  for the lower values of $L$ while it remains constant and  SE performance difference, i.e.,  $\mathcal{I}_\mathrm{SPIM}-\mathcal{I}_\mathrm{FD}$ becomes large and increases as the total number of spatial paths $L$ increases.   }

						\begin{figure}[t]
							\centering
							{\includegraphics[draft=false,width=.99\columnwidth]{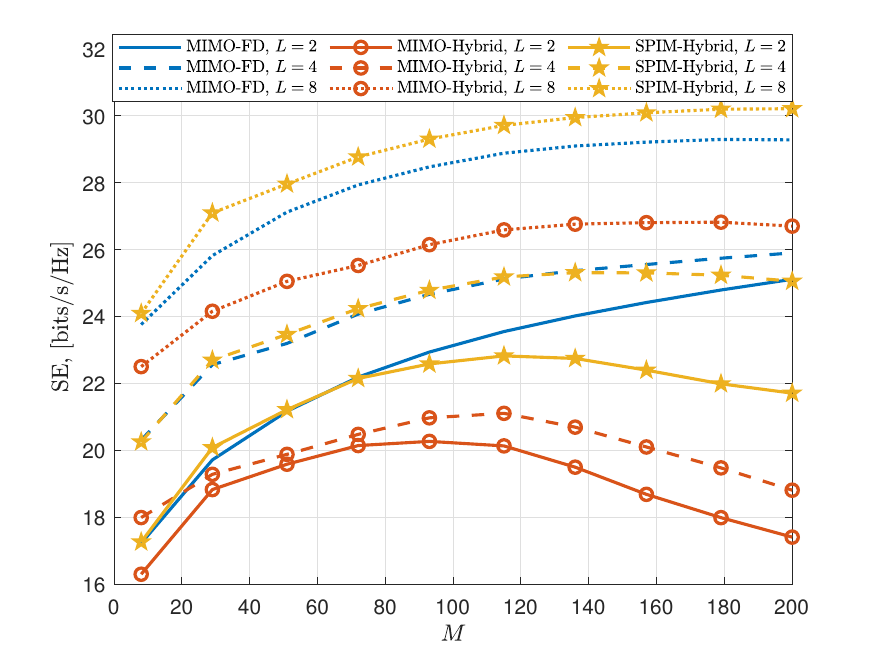} }
							\caption{SE versus $M$ when $\mathrm{SNR}=0$ dB, $L =\{2,4,8\}$ and $ L_\mathrm{S}=1$.
							}
							\label{fig_SE_M}
						\end{figure}

						\begin{figure}[t]
							\centering
							{\includegraphics[draft=false,width=.99\columnwidth]{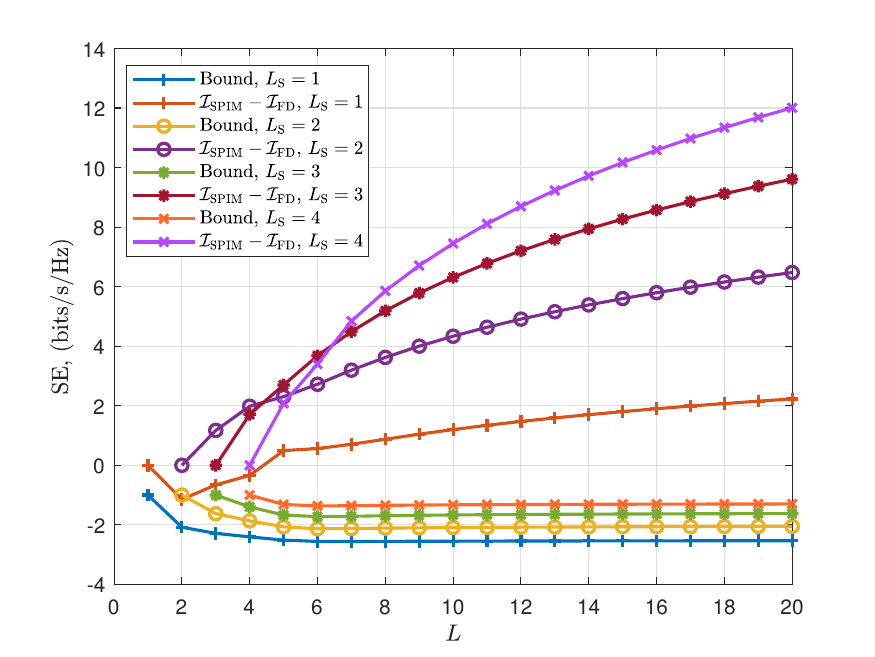} }
							\caption{SE performance bound and $\mathcal{I}_\mathrm{SPIM}-\mathcal{I}_\mathrm{FD}$ for $L_\mathrm{S}=\{1,2,3,4\}$   when $\mathrm{SNR}=0$ dB.
							}
							\label{fig_SE_bound_L}
						\end{figure}

						\begin{figure}[t]
							\centering
							{\includegraphics[draft=false,width=.99\columnwidth]{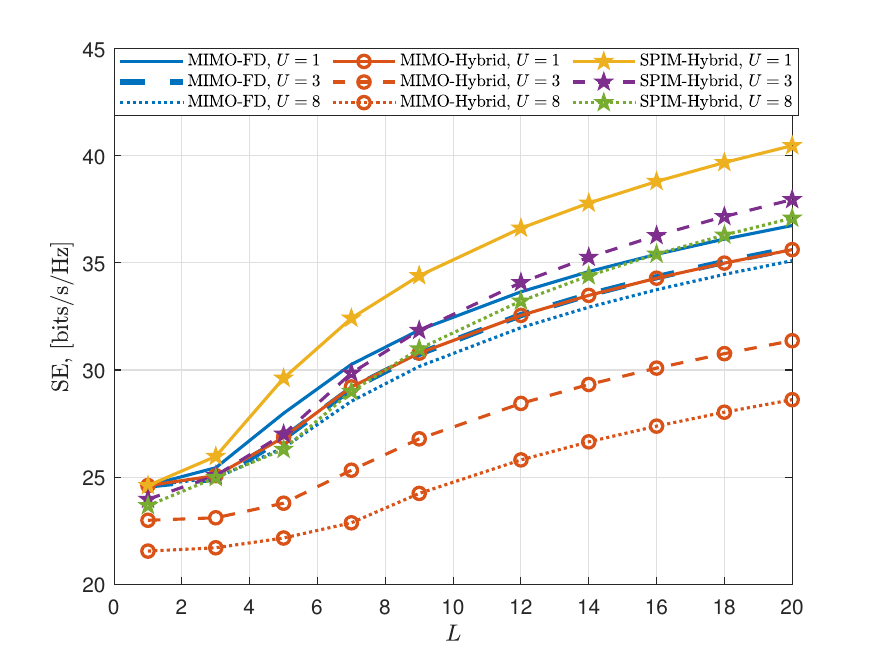} }
							\caption{SE analysis in multi-user scenario with respect to  $L$ for $U =\{1,3,8\}$ when $\mathrm{SNR}=0$ dB.
							}
							\label{fig_SE_MU_L}
						\end{figure}
						
						\begin{figure}[t]
							\centering
							{\includegraphics[draft=false,width=.99\columnwidth]{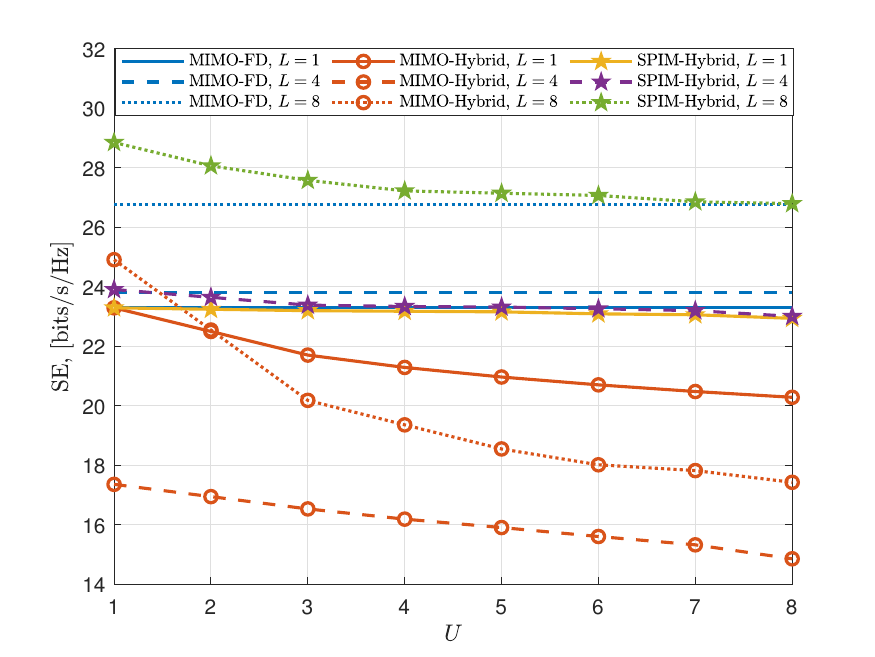} } 
							\caption{SE analysis in multi-user scenario with respect to  $U$ for $L = \{1,4,8\}$ when $\mathrm{SNR}=0$ dB.
							}
							\label{fig_SE_MU_U}
						\end{figure}

						Next, we perform numerical simulations for  multi-user scenario. In this experiment, each user employs a single RF chain while the number of RF chains at the BS equals to the number of users, i.e., $N_\mathrm{RF}=U$. Therefore, we have $L_\mathrm{S}=1$ spatial paths selected from $L$ paths for each user. Fig.~\ref{fig_SE_MU_L} shows the SE performance with respect to the $L$ for $U\in \{1,3,8\}$. We can see the improvement in the SE as $L$ increases in the multi-user scenario while a loss in the SE occurs as the number of users increases because of the interference among the users, which deteriorates the performance of digital beamformer that accounts for the inter-user interference (IUI). Nevertheless, the proposed SPIM-based hybrid beamforming achieves higher SE than that of FD multi-user beamforming while a single RF chain per user is employed. Fig.~\ref{fig_SE_MU_U} presents the SE analysis with respect to the number of users for $L\in \{1,4,8\}$. The SE reduces as the number of users increases because of IUI for hybrid beamforming techniques with/without SPIM while it is fixed for the FD beamformer which is not affected by the IUI.

						\section{Conclusions}
						We introduced an SPIM framework for RIS-aided mmWave systems, wherein the hybrid beamformers are designed by exploiting the spatial scattering paths between the BS and the user through the RIS. We provided theoretical analysis on the achievable SE relationship between the SPIM-based hybrid beamforming and FD beamforming. Then, we validated our theoretical analysis and showed that our SPIM approach can achieve higher SE than that of FD beamforming while employing a few RF chains. The performance of the proposed approach is attributed to exploiting the spatial diversity of the wireless channel between the BS and the users in RIS-aided mMIMO systems. We have shown that the SPIM-aided mMIMO is favorable in the multipath environments involving more than 4 spatial paths, which is usually the case in mmWave mMIMO scenario. The numerical experiments also revealed that better than FD beamforming performance can be achieved by using only one of two RF chains when there are only 8 spatial paths in the environment.


						


						\appendices

						\section{Proof of Lemma \ref{lemma1}} 
						\label{proofL1}
						Rewrite $	I_\mathrm{SPIM}$ in (\ref{SE_14}) for $S = 1$ as
						\begin{align}
							I_\mathrm{SPIM} &= \log_2\left(\frac{1}{(2\sigma_n^2)^{\bar{N}}}\right)   -   \log_2  \left(\frac{1}{ | 2\mathbf{M}_1 |  }  \right)\nonumber\\
							&= -\bar{N} \log_2(2\sigma_n^2)    -   \log_2  \left(\frac{1}{ 2^{\bar{N}}| \mathbf{M}_1 |  }  \right)\nonumber\\
							& =  -\bar{N}  -\bar{N} \log_2(\sigma_n^2)  +   \log_2 (2^{\bar{N}}| \mathbf{M}_1 |)   \nonumber\\
							& =  -\bar{N}  -\bar{N} \log_2(\sigma_n^2) +  \log_2 \left((2\sigma_n^2)^{\bar{N}}\left| \frac{\mathbf{M}_1}{\sigma_n^2}\right|\right)  \nonumber\\
							& =  -\bar{N}  -\bar{N} \log_2(\sigma_n^2) +  \log_2 \left((2\sigma_n^2)^{\bar{N}}\right) + \log_2 \left(\left| \frac{\mathbf{M}_1}{\sigma_n^2}\right|\right)  \nonumber\\
							& =   \log_2 \left(\frac{1}{\sigma_n^2}\left| \mathbf{M}_1\right|\right),   \label{SE_15}
						\end{align}
						from which we obtain (\ref{se_mmWave}). \qed

						\section{Proof of Theorem \ref{thereom1}}
						\label{proofT1}
						
						We consider the hybrid beamformers $\mathbf{F}_i = \mathbf{A}_i\mathbf{B}_i$ and the FD beamformer $\mathbf{F}$ for the SPIM-aided and the FD MIMO systems, respectively. As the hybrid beamformers are optimized to maximize the SE of the communication system, we can assume that the mmWave MIMO system parameters for the hybrid beamformers, i.e.,  $\mathbf{F}_i$ and the FD beamformer $\mathbf{F}$ are ``close", i.e.,  the eigenvalues of the matrix $\mathbf{I}_{N_\mathrm{S}} - \mathbf{V}_1^\textsf{H}\mathbf{F}_i\mathbf{F}_i^\textsf{H}  $ are small for $i \in \mathcal{S}$~\cite{mimoRHeath}. Then, we can write the the SE of the mmWave MIMO beamforming for a given hybrid beamformer $\mathbf{A}_i\mathbf{B}_i$ as
						\begin{align}
							&	I_\mathrm{MIMO} = \log_2 \left(\frac{1}{\sigma_n^2}\left| \mathbf{M}_i  \right|  \right) \nonumber \\
							& =\underbrace{\log_2\left( \left| \mathbf{I}_{N_\mathrm{S}} 
								+ \frac{1}{\sigma_n^2 N_\mathrm{S}} \boldsymbol{\Sigma}_1^2  \right|  \right) }_{I_\mathrm{FD}}- (N_\mathrm{S} - || \mathbf{V}_1^\textsf{H}\mathbf{F}_i ||_\mathcal{F}^2), \label{se_mmwave}
						\end{align}
						from which it is clear that $I_\mathrm{MIMO} = 	I_\mathrm{FD} $ when $\mathbf{F}_i = \mathbf{F}$ and $|| \mathbf{V}_1^\textsf{H}\mathbf{F} ||_\mathcal{F}^2 = N_\mathrm{S}$. Using (\ref{se_mmwave}), we get
						\begin{align}
							|\mathbf{M}_i| = \sigma_n^2 \left| \mathbf{I}_{N_\mathrm{S}} 
							+ \frac{1}{\sigma_n^2 N_\mathrm{S}} \boldsymbol{\Sigma}_1^2  \right| 2^{-( N_\mathrm{S} - u_i) },\label{mi_2}
						\end{align}	
						where $u_i = || \mathbf{V}_1^\textsf{H}\mathbf{F}_i ||_\mathcal{F}^2$. Consider the SE of SPIM in (\ref{SE_7}), which can be rewritten as
						\begin{align}
							&I_\mathrm{SPIM} = \log_2\left(\frac{S}{2}\right)   -  \frac{1}{S}  \sum_{i = 1}^{S}  \log_2  \left(\sum_{j = 1}^{S}\frac{1}{ | \boldsymbol{\Xi}_{ij} |  }  \right), \label{spim_1}
						\end{align}
						where the determinant of $\boldsymbol{\Xi}_{ij}\in \mathbb{C}^{N_\mathrm{R}\times N_\mathrm{R}}$ is 
						\begin{align}
							|	\boldsymbol{\Xi}_{ij}|& =  \left |		2\mathbf{I}_{N_\mathrm{R}} + \frac{1}{\sigma_n^2 N_\mathrm{S}} \mathbf{H}\mathbf{A}_i\mathbf{B}_i \mathbf{B}_i^\textsf{H}\mathbf{A}_i^\textsf{H}\mathbf{H}^\textsf{H} \right. \nonumber \\
							&\left. \hspace{40pt} +  \frac{1}{\sigma_n^2 N_\mathrm{S}} \mathbf{H}\mathbf{A}_j\mathbf{B}_j \mathbf{B}_j^\textsf{H}\mathbf{A}_j^\textsf{H}\mathbf{H}^\textsf{H} \right|\nonumber \\
							&=\left|	2\mathbf{I}_{N_\mathrm{R}} + \frac{1}{\sigma_n^2 N_\mathrm{S}} \mathbf{H} \left[\boldsymbol{\Pi}_i + \boldsymbol{\Pi}_j \right] \mathbf{H}^\textsf{H}\right| \nonumber\\ 
							& = 2\left|	\mathbf{I}_{N_\mathrm{R}} + \frac{1}{2\sigma_n^2 N_\mathrm{S}} \mathbf{H} \left[\boldsymbol{\Pi}_i + \boldsymbol{\Pi}_j \right] \mathbf{H}^\textsf{H}\right| \label{xi1}
						\end{align}
						where $\boldsymbol{\Pi}_i = \mathbf{A}_i\mathbf{B}_i \mathbf{B}_i^\textsf{H}\mathbf{A}_i^\textsf{H} \in \mathbb{C}^{N_\mathrm{T}\times N_\mathrm{T}} $. Using (\ref{mi_2}), we can write $\log_2\left(\left| 	\boldsymbol{\Xi}_{ij}  \right|  \right)$ as
						\begin{align}
							&\log_2\left(\left| 	\boldsymbol{\Xi}_{ij}  \right|  \right) =  \log_2\left( \left| \mathbf{I}_{N_\mathrm{S}} 
							+ \frac{1}{\sigma_n^2 N_\mathrm{S}} \boldsymbol{\Sigma}_1^2  \right|  \right) \nonumber \\
							& - \mathrm{tr}\{\mathbf{I}_{N_\mathrm{S}} - \mathbf{V}_1^\textsf{H}[\boldsymbol{\Pi}_i + \boldsymbol{\Pi}_j ] \mathbf{V}_1   \}, \label{xi3}
						\end{align}
						where  $\mathrm{tr}\{\mathbf{I}_{N_\mathrm{S}} - \mathbf{V}_1^\textsf{H}[\boldsymbol{\Pi}_i + \boldsymbol{\Pi}_j ] \mathbf{V}_1   \} = N_\mathrm{S} - \mathrm{tr}\{\mathbf{V}_1^\textsf{H}\boldsymbol{\Pi}_i\mathbf{V}_1\} - \mathrm{tr}\{\mathbf{V}_1^\textsf{H}\boldsymbol{\Pi}_j\mathbf{V}_1\}   \} = N_\mathrm{S} - || \mathbf{V}_1^\textsf{H}\mathbf{A}_i \mathbf{B}_i ||_\mathcal{F}^2 - || \mathbf{V}_1^\textsf{H}\mathbf{A}_j \mathbf{B}_j ||_\mathcal{F}^2 $. Thus, (\ref{xi3}) becomes 
						\begin{align}
							&\log_2\left(\left| 	\boldsymbol{\Xi}_{ij}  \right|  \right) =  \log_2\left( \left| \mathbf{I}_{N_\mathrm{S}} 
							+ \frac{1}{\sigma_n^2 N_\mathrm{S}} \boldsymbol{\Sigma}_1^2  \right|  \right) \nonumber \\
							& - (N_\mathrm{S} - u_i - u_j), \label{xi4}
						\end{align}
						where $u_z = ||\mathbf{V}_1^\textsf{H}\mathbf{A}_z \mathbf{B}_z ||_\mathcal{F}^2 $ for $z \in \{i,j\}$.		Then, using (\ref{xi4}), (\ref{xi1}) is written as
						\begin{align}
							|	\boldsymbol{\Xi}_{ij}| = 2 \left| \mathbf{I}_{N_\mathrm{S}} 
							+ \frac{1}{2\sigma_n^2 N_\mathrm{S}} \boldsymbol{\Sigma}_1^2  \right| 2^{-( N_\mathrm{S} - u_i-u_j) }. \label{xi2}
						\end{align}
						Substituting (\ref{xi2}) into (\ref{spim_1}) yields
						\begin{align}
							&I_\mathrm{SPIM} =   \log_2\left(\frac{S}{2}\right)    \nonumber \\
							&-  \frac{1}{S}  \sum_{i = 1}^{S}  \log_2  \left(\sum_{j = 1}^{S}\frac{2^{( N_\mathrm{S} - u_i-u_j)}}{  2 \left| \mathbf{I}_{N_\mathrm{S}} 
								+ \frac{1}{2\sigma_n^2 N_\mathrm{S}} \boldsymbol{\Sigma}_1^2  \right| }  \right) \nonumber\\
							& = \log_2\left(\frac{S}{2}\right)  \nonumber\\
							& -\frac{1}{S}\sum_{i = 1}^{S} \log_2 \left( \frac{1}{  2 \left| \mathbf{I}_{N_\mathrm{S}} 
								+ \frac{1}{2\sigma_n^2 N_\mathrm{S}} \boldsymbol{\Sigma}_1^2  \right| } \sum_{j = 1}^S 2^{( N_\mathrm{S} - u_i-u_j)}  \right) \nonumber  \\
							& = \log_2\left(\frac{S}{2}\right)  +\frac{1}{S}\sum_{i = 1}^{S} \log_2\left( 2 \left| \mathbf{I}_{N_\mathrm{S}} 
							+ \frac{1}{2\sigma_n^2 N_\mathrm{S}} \boldsymbol{\Sigma}_1^2  \right|  \right) \nonumber
						\end{align}
						\begin{align}
							& -\frac{1}{S}\sum_{i = 1}^{S} \log_2\left(\sum_{j = 1}^S 2^{( N_\mathrm{S} - u_i-u_j)}  \right) \nonumber 
						\end{align}
						\begin{align}
							& = \log_2\left(\frac{S}{2}\right) + \log_2 \left(2 \left|\mathbf{I}_{N_\mathrm{S}} 
							+ \frac{1}{2\sigma_n^2 N_\mathrm{S}} \boldsymbol{\Sigma}_1^2\right|  \right) \nonumber 
						\end{align}
						\begin{align}
							&-\frac{1}{S}\sum_{i = 1}^{S} \log_2\left(\sum_{j = 1}^S 2^{( N_\mathrm{S} - u_i-u_j)}  \right) \nonumber 
						\end{align}
						\begin{align}
							& = \log_2\left(\frac{S}{2}\right) + \log_2 \left(\frac{1}{2}\left| 2\mathbf{I}_{N_\mathrm{S}} 
							+ \frac{1}{\sigma_n^2 N_\mathrm{S}} \boldsymbol{\Sigma}_1^2  \right|\right) \nonumber
						\end{align}
						\begin{align}
							&-\frac{1}{S}\sum_{i = 1}^{S} \log_2\left(\sum_{j = 1}^S 2^{( N_\mathrm{S} - u_i-u_j)}  \right) \nonumber \\
							& = \log_2\left(\frac{S}{4}\right) + \log_2 \left(\left| 2\mathbf{I}_{N_\mathrm{S}} 
							+ \frac{1}{\sigma_n^2 N_\mathrm{S}} \boldsymbol{\Sigma}_1^2  \right|\right)  \nonumber 
						\end{align}
						\begin{align}
							&-\frac{1}{S}\log_2\left(\prod_{i = 1}^{S}\sum_{j = 1}^S 2^{( N_\mathrm{S} - u_i-u_j)}  \right) \nonumber \\
							& = \log_2\left(\frac{S}{4}\right) + \log_2 \left(\left| 2\mathbf{I}_{N_\mathrm{S}} 
							+ \frac{1}{\sigma_n^2 N_\mathrm{S}} \boldsymbol{\Sigma}_1^2  \right|\right)  \nonumber \\
							&- N_\mathrm{S} -\frac{1}{S}\log_2\left(\prod_{i = 1}^{S}\sum_{j = 1}^S 2^{-(   u_i+u_j)}  \right), \label{spim2}
						\end{align}
						wherein we can write the following inequality for the second term in (\ref{spim2}), i.e.,
						\begin{align}
							\left| 2\mathbf{I}_{N_\mathrm{S}} 
							+ \frac{1}{\sigma_n^2 N_\mathrm{S}} \boldsymbol{\Sigma}_1^2  \right| \geq \left| \mathbf{I}_{N_\mathrm{S}} \right| 
							+  \left|\mathbf{I}_{N_\mathrm{S}} +  \frac{1}{\sigma_n^2 N_\mathrm{S}} \boldsymbol{\Sigma}_1^2  \right|,
						\end{align}
						from which we can get
						\begin{align}
							&\log_2 \left(\left| 2\mathbf{I}_{N_\mathrm{S}} 
							+ \frac{1}{\sigma_n^2 N_\mathrm{S}} \boldsymbol{\Sigma}_1^2  \right| \right) 
							\geq 	 	I_\mathrm{FD}. \label{inequality_fd}
						\end{align}
						Substituting (\ref{inequality_fd}) into (\ref{spim2}) yields the expression in (\ref{thereom1eq}), which completes the proof. \qed

						\section{Proof of Lemma~\ref{lemma3}}
						\label{proofT2}
							The achievable rate of the overall multi-user systems is 
							\begin{align}
								\mathcal{R} = \sum_{u = 1}^{U} I({y}_{ui};\mathbf{s}, \mathbf{b}_{ui}), \label{SE_9}
							\end{align}
							where $\mathbf{s} = \left[s_1,\cdots, s_U\right]^\textsf{T}$ and  $I({y}_{ui};\mathbf{s}|\mathbf{b}_{ui})$ represents the MI between the $i$-th spatial pattern and the received signal ${y}_{ui}$ at the $u$-th user as in (\ref{sigModel_MU}), and it can be expressed as
							\begin{align}
								I({y}_{ui};\mathbf{s}, \mathbf{b}_{ui}) = I({y}_{ui};\mathbf{s}|\mathbf{b}_{ui}) + I({y}_{ui};\mathbf{b}_{ui}), \label{SE_k}
							\end{align}
							where $I({y}_{ui};\mathbf{s}|\mathbf{b}_{ui})$ and $I({y}_{ui};\mathbf{b}_{ui})$ correspond to the information bits conveyed via conventional APM and SPIM, respectively. In particular, $I({y}_{ui};\mathbf{s}|\mathbf{b}_{ui})$ corresponds to the conventional symbol transmission, and it can be quantified by using Shannon's CCMC capacity~\cite{capacity_Telatar1999Nov} as
							\begin{align}
								I({y}_{ui};\mathbf{x}|\mathbf{b}_{ui}) = \frac{1}{S}\sum_{i = 1}^{S}\log_2 \left(  {\gamma_{ui}} \right), \label{SE_8}
							\end{align}
							where $\gamma_{ui} =  1 + \mathrm{SINR}_{ui} $, for which the $\mathrm{SINR}_{ui}$ is given by
							\begin{align}
								\mathrm{SINR}_{ui} = \frac{ \frac{1}{U}|\mathbf{c}_{u,i}^\textsf{H}\mathbf{H}_u \mathbf{A}_i\mathbf{b}_{ui} s_u  |^2   }{ \frac{1}{U} \sum_{u' \neq u} |\mathbf{c}_{u,i}^\textsf{H}\mathbf{H}_u \mathbf{A}_i\mathbf{b}_{u'i} s_{u'}|^2 +   \sigma_n^2  }.
							\end{align}	In (\ref{SE_k}), the spatial domain term $I({y}_{ui};\mathbf{b}_{ui})$ corresponds to the information bits conveyed via SPIM, and it is written as~\cite{informationTheory_book_Cover2005Apr}
							\begin{align}
								I({y}_{ui};\mathbf{b}_{ui}) =& \frac{1}{S} \sum_{i = 1}^{S} \bigg(
								\int f({y}_{ui} |i) \nonumber \\
								& \times  \log_2 \left( \frac{f({y}_{ui}|i)}{\frac{1}{S}\sum_{j = 1}^{S}f({y}_{ui}|j)  }  \right) d{y}_{ui}  \bigg), \label{SE_20}
							\end{align}
							where $f({y}_{ui}|i) = \mathcal{CN}({0},\gamma_{ui})$ is the conditional probability distribution of ${y}_{ui}$ given that the $i$-th spatial pattern is selected~\cite{spim_SE_He2017May}.
							Next, we rewrite (\ref{SE_20}) as 
							\begin{align}
								I({y}_{ui}&;\mathbf{b}_{ui}) = \frac{1}{S} \sum_{i = 1}^{S} \bigg( \int f({y}_{ui}|i) \log_2 f({y}_{ui}|i) d{y}_{ui}   \nonumber \\ 
								& -  \int f({y}_{ui} |i) \log_2 \left( \frac{1}{S}\sum_{j = 1}^{S}f({y}_{ui}|j)    \right) d{y}_{ui}  \bigg). \label{SE_2}
							\end{align} 
							By computing the entropy of the random variable ${y}_{ui}$, we get the first term in the right hand side of (\ref{SE_2}) as~\cite{capacity_Telatar1999Nov} 
							\begin{align}
								\frac{1}{S}\sum_{i = 1}^S\int f({y}_{ui} |i) &\log_2 f({y}_{ui}|i) d{y}_{ui}  \nonumber\\
								& = \frac{1}{S}\sum_{i = 1}^S (- \log_2 (\pi e ) - \log_2(\gamma_{ui}) ). \label{SE_3}
							\end{align}
							The last  term in (\ref{SE_2}), i.e., $\int f({y}_{ui}|i)\log_2 \left( \frac{1}{S}\sum_{j=1}^{S}f({y}_{ui}|j) \right)d{y}_{ui} $, cannot be expressed by a closed-form formulation. However, it can be upperbounded by exploiting the concavity of $\log_2(\cdot)$ as
							\begin{align}
								&\int f({y}_{ui}|i)\log_2 \left( \frac{1}{S}\sum_{j=1}^{S}f({y}_{ui}|j) \right)d{y}_{ui} \nonumber \\
								& \overset{\mathrm{(a)}}{\leq} \log_2  \left(\frac{1}{S}\sum_{j=1}^{S}  f({y}_{ui} |i) f({y}_{ui}|j) d{y}_{ui} \right) \nonumber \\
								& = - \log_2 \pi  + \log_2 \left(\frac{1}{S}\sum_{j = 1}^{S}\frac{1}{\gamma_{ui} + \gamma_{uj}}     \right), \label{SE_4}
							\end{align}
							where $(a)$ is obtained by applying  Jensen's inequality~\cite{spim_onGSM_He2017Sep}. Combining (\ref{SE_3}) and (\ref{SE_4}), we get a lower bound for $	I({y}_{ui};\mathbf{b}_{ui})$  as
							\begin{align}
								I({y}_{ui};\mathbf{b}_{ui}) \geq  \log_2 \left(\frac{S}{e}\right)- \frac{1}{S}  \sum_{i = 1}^{S}  \log_2  \left(\sum_{j = 1}^{S}\frac{\gamma_{ui}}{\gamma_{ui} + \gamma_{uj}}     \right). \label{SE_5}
							\end{align}
							The asymptotic performance of the expression in (\ref{SE_5}) is biased. Specifically, as ${\rho/\sigma_n^2 \rightarrow 0}$ and ${\rho/\sigma_n^2 \rightarrow \inf}$, we have $	I({y}_{ui};\mathbf{b}_{ui}) {\rightarrow} 0  $ and  $I({y}_{ui};\mathbf{b}_{ui}) {\rightarrow} \log_2 S  $ while the right hand side of (\ref{SE_5}) equals to $1 -\log_2 e$ and $\log_2 S + 1 -  \log_2 e$, respectively~\cite{spim_SE_He2017May,spim_bounds_JSTSP_Wang2019May}. In order to obtain an unbiased approximation of $	I({y}_{ui};\mathbf{b}_{ui})$, the asymptotic bias, i.e., $1 -\log_2 e$ is compensated, and we get
							\begin{align}
								I({y}_{ui};\mathbf{b}_{ui}) \approx  \log_2 \left(\frac{S}{2}\right)- \frac{1}{S}  \sum_{i = 1}^{S}  \log_2  \left(\sum_{j = 1}^{S}\frac{\gamma_{ui}}{\gamma_{ui} + \gamma_{uj}}     \right). \label{SE_6}
							\end{align}
							By substituting $	I({y}_{ui};\mathbf{b}_{ui})$ and $I({y}_{ui};\mathbf{s}|\mathbf{b}_{ui})$ into (\ref{SE_k}) and (\ref{SE_9}), we obtain the closed form approximation in (\ref{SE_7}), which completes the proof.   \qed
							

						
						\footnotesize
						\bibliographystyle{IEEEtran}
						\bibliography{references_141}

							\begin{IEEEbiography}[\frame{\includegraphics[width=1.0in,height=1.25in,clip]{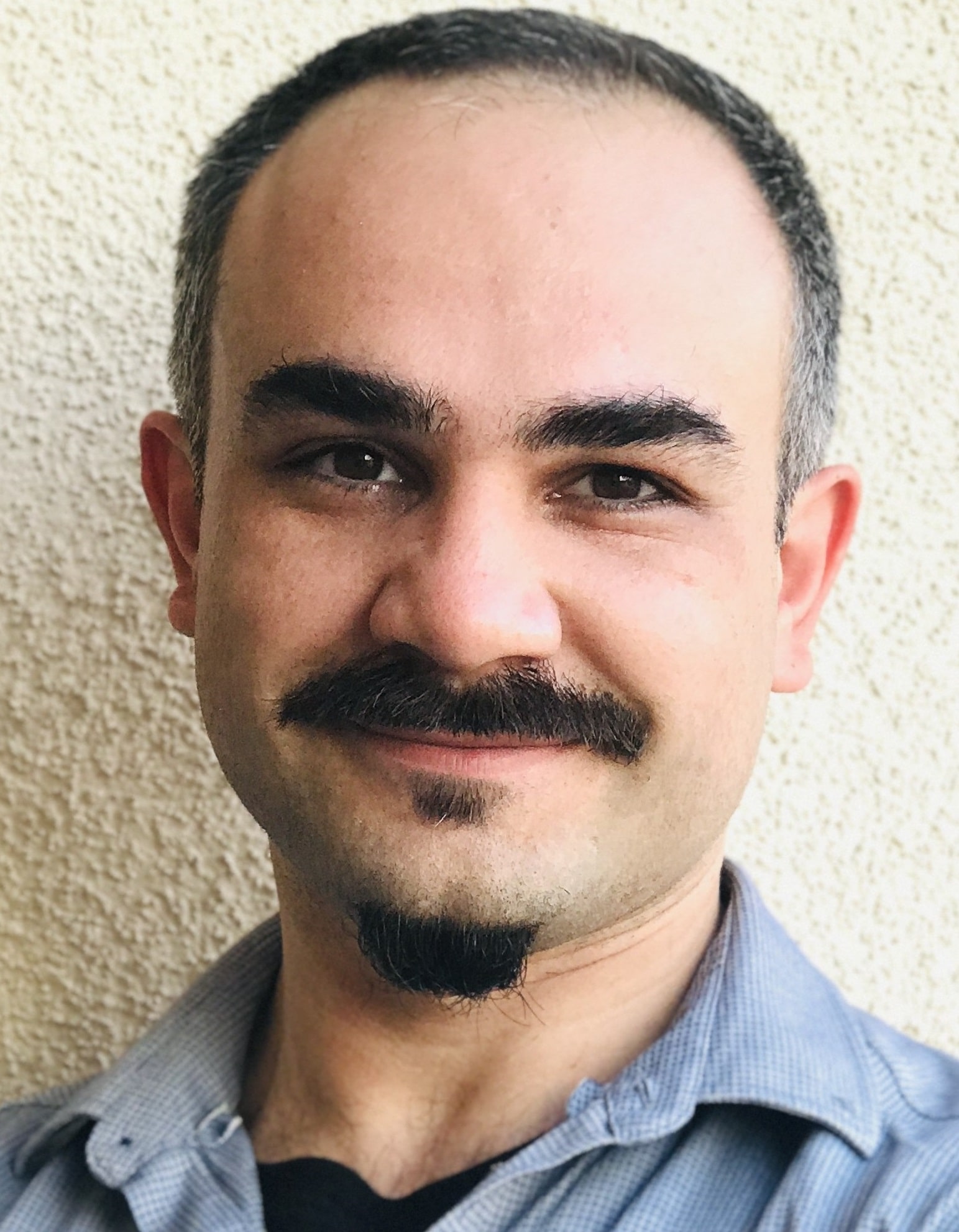}}]{Ahmet M. Elbir} (Senior Member, IEEE) received the B.S. degree with Honors from Firat University in 2009, and the Ph.D. degree from the Middle East Technical University (METU) in 2016, both in electrical engineering. He is currently an Associate Professor at Istinye University. He is also a Research Fellow at University of Luxembourg since 2021. In the past, he was a Research Fellow at King Abdullah University of Science and Technology (2023-2025), the University of Hertfordshire (2021-2023), a Postdoctoral Researcher at Koc University  (2020-2021) and Carleton University (2022-2023). He also worked as a Senior Researcher at Duzce University (2016-2022). His research interests include array signal processing for radar systems and wireless communications, and deep learning for multi-antenna systems. He currently serves as an Associate Editor for \textsc{IEEE Transactions on Signal Processing} and \textsc{IEEE Transactions on Aerospace and Electronic Systems} and a Senior Area Editor for \textsc{IEEE Signal Processing Letters}. He served as an Associate Editor for \textsc{IEEE Wireless Communications Letters}, and a Lead Guest Editor for the special issues in \textsc{IEEE Journal of Selected Topics in Signal Processing}, \textsc{IEEE Signal Processing Magazine}, \textsc{IEEE Wireless Communications}, \textsc{IEEE Communications and Standards Magazine}, \textsc{IEEE Transactions on Antennas and Propagation}, \textsc{IEEE Journal of Selected Topics in Applied Earth Observations and Remote Sensing} and Elsevier Signal Processing. He is also the Chair of IEEE SPS Synthetic Aperture Technical Working Group (TWG), a member of IEEE SPS Sensor Array and Multichannel (SAM) Technical Committee, member of Steering Committee for \textsc{IEEE Wireless Communications Letters}  and a member of IEEE AESS Integrated Sensing and Communications TWG. Dr. Elbir is the recipient of the METU Best Ph.D. thesis award (2016), the IEEE Turkey Section Research Encouragement Award (2022), the IET Radar, Sonar \& Navigation Best Paper Award (2022) and the IEEE AESS Harry Rowe Mimno Award (2026).
						\end{IEEEbiography}

						\begin{IEEEbiography}[\frame{\includegraphics[width=1.0in,height=1.25in,clip]{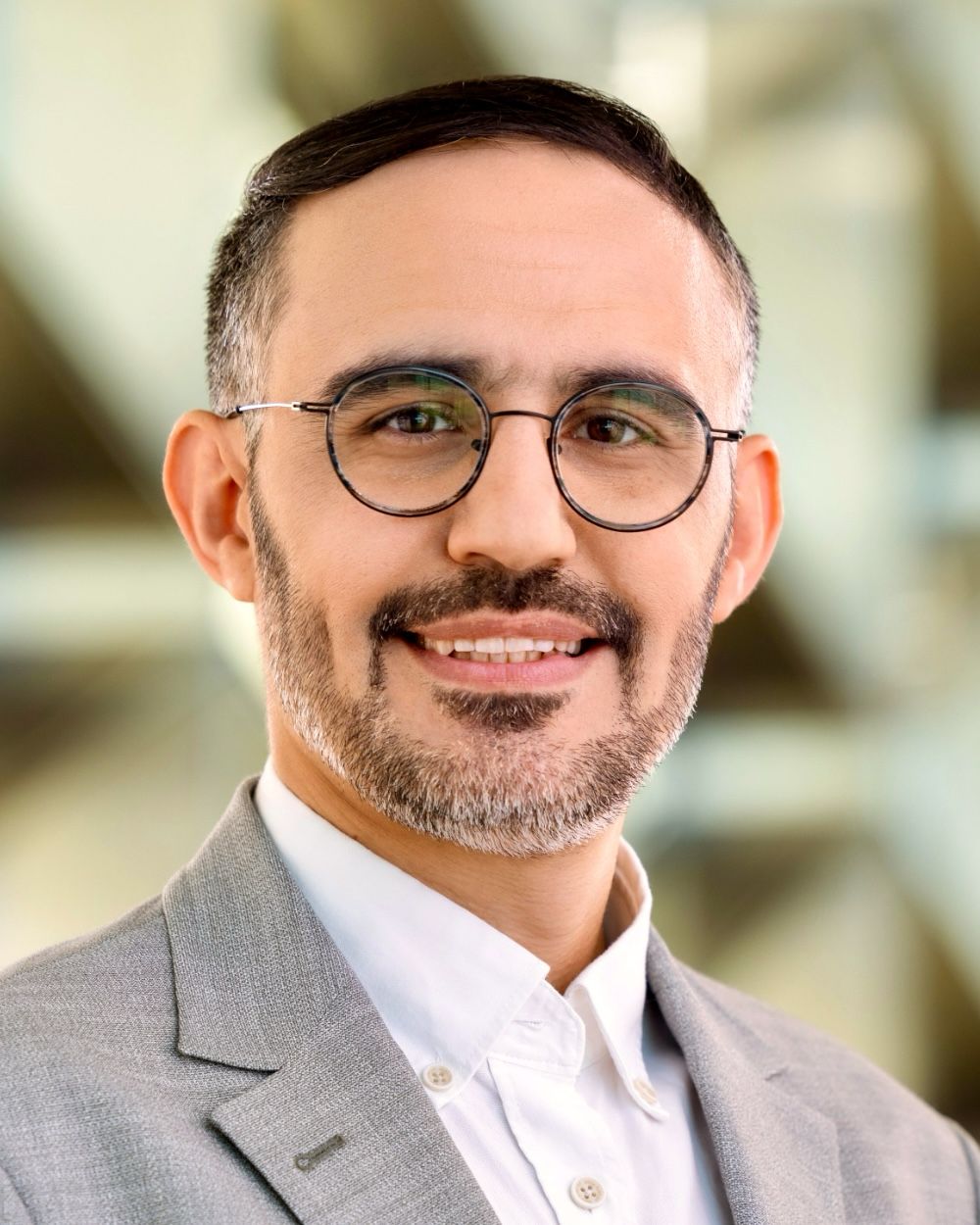}}]{Abdulkadir Celik} (Senior Member, IEEE) 
							is an Associate Professor in the School of Electronics and Computer Science at the University of Southampton, UK, where he also serves as the Director of the Centre for Internet of Things and Pervasive Systems. He received the Ph.D. degree in double-majors of Electrical Engineering and Computer Engineering from Iowa State University, Ames, IA, USA, in 2016; wherein he also earned M.S. degrees in Electrical Engineering and Computer Engineering in 2013 and 2015, respectively. Prior to his current appointment, he was a senior research scientist from 2020 to 2025 and a post-doctoral fellow from 2016 to 2020 at King Abdullah University of Science and Technology (KAUST), Thuwal, KSA. Dr. Celik is the recipient of \textsc{IEEE} Communications Society’s 2023 Outstanding Young Researcher Award for Europe, Middle East, and Africa (EMEA) region. He currently serves as an editor for npj Wireless Technology, \textsc{IEEE Transactions on Communications}, \textsc{IEEE Communications Letters}, \textsc{IEEE Wireless Communication Letters}, and Frontiers in Communications and Networks. His research interests are in the broad areas of next-generation wireless communication systems and networks.
						\end{IEEEbiography}
						
							\begin{IEEEbiography}[\frame{\includegraphics[width=1.0in,height=1.25in,clip]{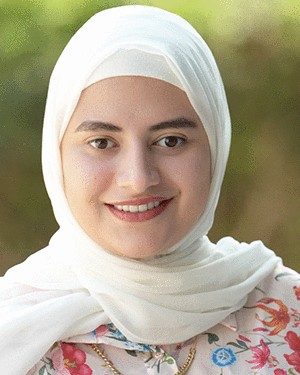}}]{Asmaa Abdallah} (Member, IEEE) received the BS (with High Distinction) and MS degrees in computer and communications engineering from Rafik Hariri University (RHU), Lebanon, in 2013 and 2015, respectively, and the PhD degree in electrical and computer engineering ffrom the American University of Beirut (AUB), Beirut, Lebanon. From 2021 to 2024, she was a postdoctoral fellow with the King Abdullah University of Science and Technology (KAUST), where she is currently a research scientist with the Communications and Computing Systems Laboratory. Her research interests include machine learning, communication theory, stochastic geometry, array signal processing, with emphasis on energy and spectral efficient algorithms for next-generation wireless communication systems. Dr. Abdallah was the recipient of the Academic Excellence Award at RHU in 2013 for ranking first on the graduating class, and the scholarship from the Lebanese National Counsel for Scientific Research (CNRS-L/AUB) to support her doctoral studies. In 2023, Dr. Abdallah has been selected by MIT technology review as one of the leading 15 Innovators under 35 in the MENA area. From 2016 to 2020, she was a member of the executive committee of IEEE Young Professionals Lebanon’s Section
						\end{IEEEbiography}

							\begin{IEEEbiography}[\frame{\includegraphics[width=1.0in,height=1.25in,clip]{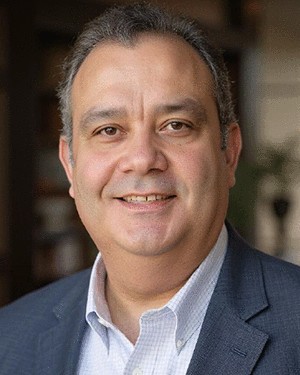}}] {Ahmed M. Eltawil} (Senior Member, IEEE) received the bachelor’s and master’s degrees from Cairo University, Giza, Egypt, in 1997 and 1999, respectively, and the Doctorate degree from the University of California, Los Angeles, in 2003. From 2005 to 2021, he was a professor of electrical engineering and computer science with University of California, Irvine (UCI), USA. He is currently a professor and associate dean for Research with Computer, Electrical, and Mathematical Sciences and Engineering (CEMSE) Division, King Abdullah University of Science and Technology (KAUST), where he established the Communication and Computing Systems Laboratory (CCSL) to conduct research on efficient architectures for computing and communications systems, with a particular focus on mobile wireless systems. His research interests include various application domains, such as low-power mobile systems, machine learning platforms, sensor networks, body area networks, and critical infrastructure networks. He was the recipient of the several recognitions and awards, including the US National Science Foundation CAREER Award, 2021 ”Innovator of the Year” award by the Henry Samueli School of Engineering at the University of California, Irvine, and two United States Congress certificates of merit, among other recognitions. He was also on numerous editorial roles over the years, and an expert reviewer for national and international funding agencies and review boards. He was a distinguished lecturer for IEEE COMSOC during the 2023/24 term. He also holds senior membership in the National Academy of Inventors in the United States.
						\end{IEEEbiography}

					\end{document}